\newtheorem{theorem}{Theorem}
\newtheorem{corollary}[theorem]{Corollary}
\DeclareDocumentCommand\tableYes{}{\usym{2713}}
\DeclareDocumentCommand\tableNo{}{\usym{2717}}
\DeclareDocumentCommand\R{}{\mathbb{R}}
\DeclareDocumentCommand\costOne{}{\ensuremath{C_1}}
\DeclareDocumentCommand\costTwo{}{\ensuremath{C_2}}
\DeclareDocumentCommand\actions{}{\ensuremath{\mathcal{A}}}
\DeclareDocumentCommand\actionsOne{}{\ensuremath{\mathcal{A}_1}}
\DeclareDocumentCommand\actionsTwo{}{\ensuremath{\mathcal{A}_2}}
\DeclareDocumentCommand\labelOneOpt{}{\ensuremath{\textcolor{blue}{A_1^{\text{opt}}}}}
\DeclareDocumentCommand\labelTwoOpt{}{\ensuremath{\textcolor{green!50!black}{A_2^{\text{opt}}}}}
\DeclareDocumentCommand\labelOneEqui{}{\ensuremath{\textcolor{red}{A_1^{\text{equi}}}}}
\DeclareDocumentCommand\labelTwoEqui{}{\ensuremath{\textcolor{orange}{A_2^{\text{equi}}}}}
\DeclareDocumentCommand\labelTwoEquiX{}{\ensuremath{\textcolor{purple}{A_2^{\text{equi}'}}}}
\title{Exact Price of Anarchy for Weighted Congestion Games with Two Players}
\author{Joran van den Bosse \and Marc Uetz \and Matthias Walter}
\DeclareDocumentCommand\colorOne{m}{\textcolor{blue}{#1}}
\DeclareDocumentCommand\colorTwo{m}{\textcolor{green!60!black}{#1}}
\pgfplotsset{compat=1.15}
\begin{document}

\maketitle

\begin{abstract}
  This paper gives a complete analysis of worst-case equilibria for various versions of weighted congestion games with two players and affine cost functions.
  The results are exact price of anarchy bounds which are parametric in the weights of the two players, and establish exactly how the primitives of the game enter into the quality of equilibria.
  Interestingly, some of the worst-cases are attained when the players' weights only differ slightly.
  Our findings also show that sequential play improves the price of anarchy in all cases, however, this effect vanishes with an increasing difference in the players' weights.
  Methodologically, we obtain exact price of anarchy bounds by a duality based proof mechanism,
  based on a compact linear programming formulation that computes worst-case instances.
  This mechanism yields duality-based optimality certificates which can eventually be turned into purely algebraic proofs.
\end{abstract}

\section{Introduction}

This paper studies the quality of equilibria for games with two players that compete for a set of resources, when the cost (or congestion) on each resource is given by an affine function that depends on the total load of that resource. The games that we consider fall into a class of games known as weighted Rosenthal congestion games \cite{Rosenthal73,Milchtaich96}. 
As we address several variants of such games, before discussing our actual contribution, let us first define these different settings.

\paragraph{Problem definition.} There are two players denoted $i=1,2$, and the possible \emph{actions} of player $i$ are given by admissible subsets $\mathcal{A}_i \subseteq 2^R$ of a finite set of resources $R$. This includes as a special case so-called \emph{atomic network routing games}, where the resources are edges $E$
in some directed graph $G=(V,E)$, and each player wants to establish a path between two source and target vertices $s_i,t_i\in V(G)$, so that the admissible actions of player~$i$ are precisely the edge sets of the directed  $(s_i,t_i)$-paths.  A congestion game is called \emph{symmetric} if $\mathcal{A}_1 = \mathcal{A}_2$. For network routing games, that means that all players have the same source and the same target.

We address two different game theoretic settings. In \emph{simultaneous} games, both players choose their actions simultaneously, and the admissible actions $\mathcal{A}_i$ of a player~$i$ are exactly the player's strategies in the corresponding strategic form bimatrix game. 
We also consider \emph{sequential} extensive form games where the players choose actions after each other, w.l.o.g.\ first player 1, then player 2. Here, the first player's strategies are the actions $\mathcal{A}_1$, and a strategy for the second player consists of one action from $\mathcal{A}_2$ for \emph{each} possible strategy ($=$ action) of the first player. In both cases, the game results in an outcome where both players have chosen one action, denoted \emph{action profile} $A=(A_1,A_2)$, where $A_i\in\mathcal{A}_i$.

The players have to pay for each resource $r\in R$ that they choose, and the cost of a resource~$r\in R$ depends on its load, which again depends on the set of players using it.  In the unweighted version of the problem, the load $x_r$ of a resource~$r$ equals the \emph{number} of players using it, and the game is a potential game \cite{Rosenthal73,MondererS1996}. In the weighted version of the problem, each player~$i$ has a \emph{player-specific weight $w_i$}, and the load $x_r$ of a resource~$r\in R$ equals the total weight of the players that have chosen it. So if 
$A=(A_1,A_2)$ is an action profile,  the load of a resource~$r\in R$ equals
$x_r(A)=\sum_{i| r\in A_i} w_i$.
When $w_1=w_2$, this corresponds to the unweighted version (possibly after scaling).

Throughout the paper, we assume that the cost function for each resource~$r$ is an affine function in its load $x_r$, with non-negative coefficients.
That is, for each resource $r \in R$, there are non-negative reals $\alpha_r \geq 0$ and $\beta_r \geq 0$, and with $x_r$ being the total load of resource~$r$, the cost of resource $r$ equals $\alpha_r + \beta_r x_r$.

For weighted congestion games, two different cost functions appear in the literature, under different names. We here follow the nomenclature as used e.g.\ in~\cite{HarksK16}.
The first class of cost functions is  \emph{uniform costs}~\cite{FotakisKS05,GairingMT06,IeongMNSS05,Milchtaich06,PanagopoulouS07,HarksK16}, where the cost of a resource can be thought of as a delay or latency (as e.g.\ in traffic) which is the same for all players choosing that resource, so that each player pays the costs for the loads of all chosen resources. That means that
an action profile $A=(A_1,A_2)$ yields as cost for player~$i$
\begin{equation}
  C_i^{\text{uni}}(A) \coloneqq \sum_{r \in A_i} (\alpha_r + \beta_r \sum_{j:\, r \in A_j} w_j)\,. \label{eq_cost_uni}
\end{equation}
The second class of cost functions is \emph{proportional costs}~\cite{GoemansMV05,HarksK16}, where the cost of a resource can be thought of as as per-unit cost, and each player pays proportionally to the load that the player imposes on a resource. Then, 
an action profile $A=(A_1,A_2)$ yields as cost for player~$i$
\begin{equation}
  C_i^{\text{prop}}(A) \coloneqq w_i \sum\limits_{r \in A_i} (\alpha_r + \beta_r \sum\limits_{j :\, r \in A_j} w_j). \label{eq_cost_prop}
\end{equation}
In the following we also use $C_i(\,\cdot\,)$ to denote either of the two cost functions.
Note that the cost functions agree in the (unweighted) case when $w_1 = w_2 = 1$.



%
\paragraph{Equilibria.}
For simultaneous games, a strategy profile $(\labelOneEqui, \labelTwoEqui)$ is a (pure) Nash equilibrium~\cite{Nash1950,Nash1951} if none of the players can improve unilaterally, i.e.,
\begin{subequations}
  \label{eq_lp_nash}
  \begin{align}
    \costOne(\labelOneEqui,\labelTwoEqui) &\leq \costOne(A_1,\labelTwoEqui) &&\forall A_1 \in \actionsOne \label{eq_lp_nash_change_one} \\
    \costTwo(\labelOneEqui,\labelTwoEqui) &\leq \costTwo(\labelOneEqui,A_2) &&\forall A_2 \in \actionsTwo \label{eq_lp_nash_change_two}
    \end{align}
\end{subequations}
Since we assume that the cost functions per resource are affine, the existence of pure Nash equilibria is guaranteed~\cite{HarksKlimmMOR2012}.
Note that this need not be true for arbitrary non-decreasing cost functions, not even for weighted symmetric two-player games \cite{FotakisKS05}. For unweighted congestion games, pure Nash equilibria always exist~\cite{Rosenthal73}.

For sequential games, let us assume w.l.o.g.\ that the players choose their actions in the order 1, and then 2. The first player then has strategy set $\actionsOne$.
A strategy of the second player can be written as a tuple of length $|\actionsOne|$, specifying the response of player~2 to any possible strategy of player~1.
Since we consider full information games, in equilibrium both players can choose a (pure) strategy that maximixes the player's cost.
For player~2 that means, if $A_1$ is chosen by player~1, to choose any 
$\labelTwoEqui$
so that 
\[
  C_2(A_1,\labelTwoEqui) \leq C_2(A_1,A_2) \text{ for all } A_2 \in \actionsTwo \, .
\]
If we denote by $\labelTwoEqui(A_1)$, $A_1\in\actionsOne$, such an equilibrium strategy for player~2, player~1 chooses any strategy $\labelOneEqui$ so as to minimize her cost, that is,  
\[
  C_1(\labelOneEqui,\labelTwoEqui(\labelOneEqui)) \leq C_2(A_1,\labelTwoEqui(A_1)) \text{ for all } A_1 \in \actionsOne \, .
\]
All pairs of strategies $(\labelOneEqui,\labelTwoEqui(\labelOneEqui))$ that fulfill both conditions are precisely the (pure) subgame-perfect equilibria~\cite{Selten1973} for the sequential, extensive form game.
For the full information games considered here, (pure) subgame perfect strategies exist and can easily be computed by the ``procedure'' as sketched above, known as backward induction~\cite{Peters2008}.
A subgame perfect equilibrium yields as outcome an action profile  $(\labelOneEqui, \labelTwoEqui)$.
It is well known, even for network routing games, that subgame perfect outcomes need not be a Nash equilibrium in the corresponding simultaneous game~\cite{CorreaJKU19}, which is one of the difficulties in analyzing subgame-perfect equilibria.

\paragraph{Price of anarchy.} It is well known that if players choose their actions $A=(A_1,A_2)$ selfishly while only considering their own cost functions $C_i(A)$, there may be outcomes that are stable with respect to either Nash or subgame-perfect equilibrium, but that solution need not minimize the cost of both players together $C(A):=C_1(A)+C_2(A)$.  The price of anarchy~\cite{Koutsoupias2009} measures these negative effects of selfish behaviour. It is defined as the maximum cost of an equilibrium outcome, relative to the cost of the so-called \emph{social optimum} which is an action profile that minimizes total cost. Formally, for a given instance $I$ of a weighted congestion game, if we define $\mathcal{A}^{\text{equi}}(I)$ as the set of action profiles that may result as outcome from some equilibrium, and $A^{\text{opt}}(I)$ as a social optimum, so any profile that minimizes total costs $C(\,\cdot\,)$, the price of anarchy is 
\[
  \textup{PoA}(I) \coloneqq \max_{A \in \mathcal{A}^{\text{equi}}(I)}  \frac{C(A)}{C(A^{\text{opt}}(I))} \, .
\]
For sequential games, the maximum is also taken over all possible orders of the players.
The price of anarchy for a class of games is the supremum of $\textup{PoA}(I)$ over all instances $I$ of that class. For sequential games, the price of anarchy has also been called the \emph{sequential price of anarchy} \cite{PaeLeme2012}; it  has been analyzed in several settings, e.g.\ \cite{PaeLeme2012,Bilo2013,DeJong2014,JongU19}.



\DeclareDocumentCommand\gapClosed{m}{\textcolor{green!50!black}{#1}}
\DeclareDocumentCommand\gapOpen{m}{\textcolor{red}{#1}}

\begin{table}[H]
  \caption{%
    Known results and improvements for lower bounds (lb) and upper bounds (ub) on the price of anarchy for weighted \emph{simultaneous} congestion games with two players.
    The first columns indicate restriction (\tableYes) to network or symmetric games or no restriction (\tableNo), as well as the cost function. In the first column, ``\tableNo, \tableYes\,'' indicates that the upper bound holds in general, while the lower bound is attained even for network routing games.
  }
  \label{tab_literature_sim}
  \begin{center}
    \renewcommand{\arraystretch}{1.1}
    \setlength\tabcolsep{1ex}
    \begin{tabular}{c|c|c|c|c|c|c|l}
      \textbf{Net}
      & \textbf{Sym}
      & \textbf{Cost}
      & \textbf{Old lb}
      & \textbf{New lb}
      & \textbf{New ub}
      & \textbf{Old ub}
      & \textbf{Result} \\ \hline
       \tableNo, \tableYes & \tableYes &        & \gapClosed{$1.6$}~\cite{CorreaJKU19} &  &  & \gapClosed{$1.6$}~\cite{ChristodoulouK05} \\ \hline
       \tableNo, \tableYes & \tableYes & uni    & $1.6$~\cite{CorreaJKU19} & \gapClosed{$2$} & \gapClosed{$2$} & & Thm.~\ref{thm_sym_uni_network}, Cor.~\ref{thm_sym_uni_all_weights} \\ \hline
       \tableNo, \tableYes & \tableYes & prop   & $1.6$~\cite{CorreaJKU19} & \gapClosed{$\approx 1.6096$} & \gapClosed{$\approx 1.6096$} & $\approx 2.618$~\cite{AwerbuchAE05} & Cor.~\ref{thm_sym_prop_all_weights}\\ \hline
       \tableNo, \tableYes & \tableNo &                    & \gapClosed{$2$}~[folklore]  
       &  &  & \gapClosed{$2$}~\cite{ChristodoulouK05}  \\
       \hline
       \tableNo, \tableYes & \tableNo & uni                & $2$~[folklore] &  \gapClosed{$\approx 2.155$} & \gapClosed{$\approx 2.155$} & & Cor.~\ref{thm_sim_uni_all_weights} \\ \hline
       \tableNo, \tableYes & \tableNo & prop               & $2$~[folklore] & \gapClosed{$\approx 2.0411$} & \gapClosed{$\approx 2.0411$} & $\approx 2.618$~\cite{AwerbuchAE05} & Cor.~\ref{thm_sim_prop_all_weights} 
    \end{tabular}
  \end{center}
\end{table}

\paragraph{Contribution and known results.}

This paper gives the exact price of anarchy results for several classes of \emph{weighted} two-player congestion games with respect to Nash equilibria for simultaneous games, and  subgame-perfect equilibria for sequential games.
We further distinguish between arbitrary congestion games and network routing games, uniform and proportional cost functions, as well as symmetric and asymmetric games. This gives rise to 16 different classes of two-player games. An overview of previously known bounds as well as new ones presented in this paper can be found in \cref{tab_literature_sim,tab_literature_seq}. The table also contains the previously known results for the special case of unweighted congestion games (if known). We not only give the worst-case bounds, but also prove the exact price of anarchy bounds parametric in the weight ratio $w_1/w_2$ of the two players, which provides additional insight. \cref{fig_plot} depicts these findings.

Our results close a gap in the existing literature on the analysis of equilibria for affine congestion games.
Let us therefore briefly discuss what is known. 

The price of anarchy for \emph{unweighted} affine congestion games with an {arbitrary number of players} is known to be bounded from above by 5/2 \cite{AwerbuchAE05,ChristodoulouK05}. This bound is tight for $n\ge 3$ players \cite{ChristodoulouK05}. For unweighted two-player games, the tight bound is 2~\cite{ChristodoulouK05}. The 5/2 upper bound improves to $(5n-2)/(2n+1)$ for \emph{symmetric} games with $n$ players~\cite{ChristodoulouK05}, which is tight even in the class of network routing games~\cite{CorreaJKU19}.
For non-symmetric, unweighted and affine congestion games, the bound 5/2 is tight (for $n\to\infty$) even for \emph{singleton} congestion games, where  each player chooses a single resource only, i.e., $|A_i|=1$ for all $A_i\in\mathcal{A}_i$~\cite{Caragiannis2010}.
For unweighted and affine singleton congestion games that are symmetric, so when each player can choose every single resource, the price of anarchy equals $4/3$~\cite{Luecking2008}.
This singleton model can also be interpreted as a network routing model with parallel source-sink arcs, and is also dubbed the \emph{parallel link} model, see e.g.~\cite{Roughgarden2003}.

For \emph{weighted} affine congestion games, the price of anarchy with an {arbitrary number of players} is slightly larger than 5/2; it equals $1+\phi$ with $\phi=(1+\sqrt{5})/2\approx 1.618$ being the golden ratio~\cite{AwerbuchAE05}. This is again tight for $n\ge 3$ players~\cite{DBLP:conf/waoa/Bilo12}. Here, we are not aware of improved bounds for special cases or with two players.

\begin{table}[H]
  \caption{%
    Known results and improvements for lower bounds and upper bounds on the price of anarchy for weighted \emph{sequential} congestion games with two players.
    The meaning of the columns is the same as in \cref{tab_literature_sim}.
  }
  \label{tab_literature_seq}
  \begin{center}
    \renewcommand{\arraystretch}{1.1}
    \setlength\tabcolsep{1ex}
    \begin{tabular}{c|c|c|c|c|c|c|l}
      \textbf{Net}
      & \textbf{Sym}
      & \textbf{Cost}
      & \textbf{Old lb}
      & \textbf{New lb}
      & \textbf{New ub}
      & \textbf{Old ub}
      & \textbf{Result} \\ \hline
      \tableNo, \tableYes & \tableYes &      & \gapClosed{$1.4$}~\cite{CorreaJKU19} &  & & \gapClosed{$1.4$}~\cite{CorreaJKU19} \\ \hline
      \tableNo, \tableYes & \tableYes & uni  & \gapOpen{$1.4$}~\cite{CorreaJKU19} &  & \gapOpen{$2$} & & Cor.~\ref{thm_seq_uni_all_weights} \\ \hline
      \tableNo, \tableYes & \tableYes & prop & \gapOpen{$1.4$}~\cite{CorreaJKU19} &  & \gapOpen{$1.5$} & & Cor.~\ref{thm_seq_prop_all_weights} \\ \hline
      \tableNo & \tableNo &                              & \gapClosed{$1.5$}~\cite{JongU19} &  &  & \gapClosed{$1.5$}~\cite{JongU19} \\ \hline
      \tableYes & \tableNo &                  & 1 & \gapClosed{$1.5$} &  & \gapClosed{$1.5$}~\cite{JongU19} \\ \hline
      \tableNo, \tableYes & \tableNo & uni              & $1$ & \gapClosed{$2$} & \gapClosed{$2$} & & Cor.~\ref{thm_seq_uni_all_weights} \\ \hline
      \tableNo, \tableYes & \tableNo & prop             & $1$ & \gapClosed{$1.5$} & \gapClosed{$1.5$} & & Cor.~\ref{thm_seq_prop_all_weights}
    \end{tabular}
  \end{center}
\end{table}

On the other hand, for the \emph{sequential} version of \emph{unweighted}, affine congestion games, it is known that the (sequential) price of anarchy equals $1.5$ for two players~\cite{JongU19},  it equals $1039/488\approx 2.13$ for $n=3$ players~\cite{JongU19}, and for $n=4$ players it equals $28679925/10823887 \approx 2.65$~\cite{Bosse2021}. Moreover, for symmetric network routing games and two players, it equals $1.4$~\cite{CorreaJKU19}, while it can be as large as $\Omega(\sqrt{n})$ for an arbitrary number of players~\cite{CorreaJKU19}.

The above summary suggests that the potential loss of efficiency caused by selfish players in  affine congestion games is by now pretty well understood. Yet we believe that the two-player case is a fundamental base case that deserves special attention, and it does not seem to be well understood when the two players are not identical, i.e., the weighted case. 
Also, our results exhibit how the primitives of the game, such as symmetry of strategies, proportionality of costs, or sequentiality, have effects on the price of anarchy, which we believe are interesting and not obvious.

As to the technical approach and contribution of this paper, all our results take as starting point a linear programming based idea to compute the cost functions of worst-case instances that extends the one used for unweighted congestion games in~\cite{JongU19}. The linear program computes the parameters of all resource cost functions for the finite worst case instance, which can be done because it can be proved that such finite worst-case instances 
must exist~\cite{DeJong2014,JongU19}. However for weighted games this can only be done for any \emph{fixed} pair of weights $w_1,w_2$. Doing this for several weight ratios $w_1/w_2$, however, one can produce ``educated guesses'' for the dependence of the cost parameters on the weights, and the same can be done for the corresponding optimal dual solution. The so derived expressions for primal and dual solutions can finally be turned into algebraic proofs for a matching upper bound on the price of anarchy. This primal-dual procedure is somewhat mechanical, yet it can  turn out to be quite tedious. It should be noted that this approach is reminiscent of the primal-dual technique as suggested in~\cite{DBLP:conf/waoa/Bilo12}, however here we work with a different linear programming formulation.

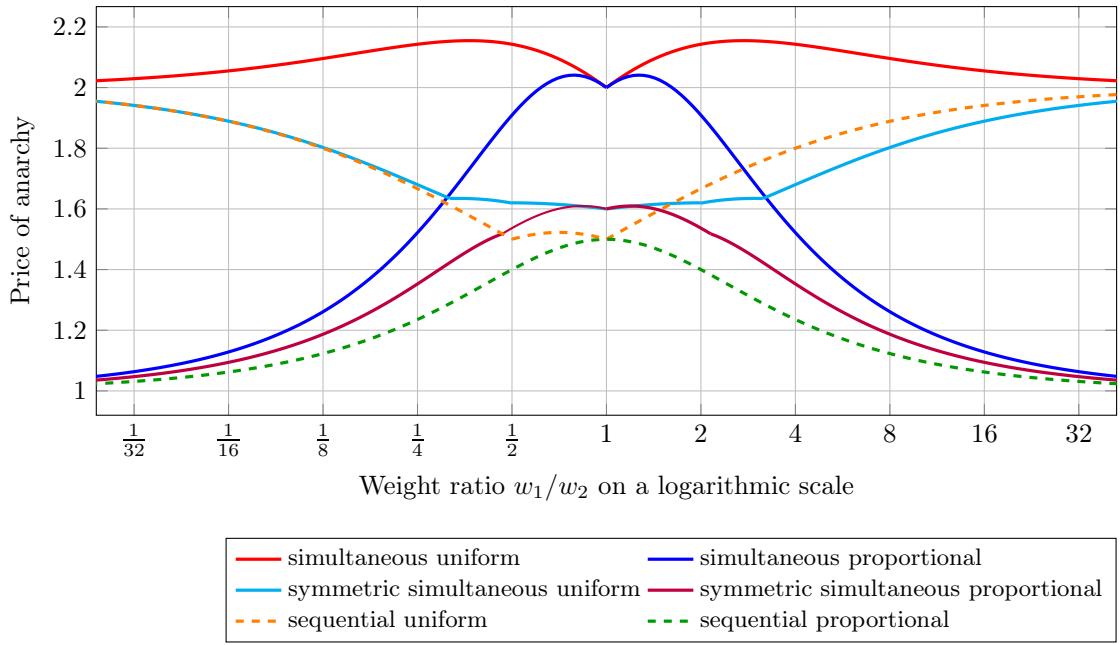
\begin{figure}[htbp]
  \begin{center}
    \edef\plotSamples{100} 
    \begin{tikzpicture}
      \begin{semilogxaxis}[
        xlabel={Weight ratio $w_1 / w_2$ on a logarithmic scale},
        ylabel={Price of anarchy},
        grid=both,
        xmin=0.03125,
        xmax=32,
        xtick={0.03125,0.0625,0.125,0.25,0.5,1,2,4,8,16,32},
        xticklabels={$\frac{1}{32}$,$\frac{1}{16}$,$\frac{1}{8}$,$\frac{1}{4}$,$\frac{1}{2}$,1,2,4,8,16,32},
        ytick={1.0,1.2,1.4,1.6,1.8,2.0,2.2,2.4,2.6},
        width=\textwidth,
        height=70mm,
        enlarge x limits=0.04,
        enlarge y limits=0.10,
        legend style={at={(1.0,-0.3)},anchor=north east,font=\small},
        legend columns=2,
        legend cell align={left},
        ]

        \addplot[red,very thick, domain=1:50, samples=\plotSamples] {1 + (2*x+x^2) / (x^2+x+1)};
        \addlegendentry{simultaneous uniform}

        \addplot[blue,very thick, domain=1:50, samples=\plotSamples] {1 + x * (x+1+x) / (x^3+1+x*1)};
        \addlegendentry{simultaneous proportional}
        
        \addplot[cyan,very thick, domain=1:2, samples=\plotSamples] {(3*x^3 + 9*x^2 + 9*x+3) / (2*x^3 + 5*x^2 + 6*x + 2)};
        \addlegendentry{symmetric simultaneous uniform}

        \addplot[purple,very thick, domain=2.1479:50, samples=\plotSamples] {(2*x^4 + 4*x^3 + 4*x^2 + 2*x) / (2*x^4+x^3+2*x^2+3*x+1)};
        \addlegendentry{symmetric simultaneous proportional}

        \addplot[orange,dashed,very thick, domain=1:50, samples=\plotSamples] {1 + x/(1+x)};
        \addlegendentry{sequential uniform}

        \addplot[green!60!black,dashed,very thick, domain=0.02:50, samples=\plotSamples] {1 + x/(x^2+1)};
        \addlegendentry{sequential proportional}
        
        \addplot[red,very thick, domain=0.02:1, samples=\plotSamples] {1 + (2*x+1) / (x^2+x+1)};

        \addplot[blue,very thick, domain=0.02:1, samples=\plotSamples] {1 + x * (x+1+1) / (x^3+1+x*x)};

        \addplot[cyan,very thick, domain=0.5:1, samples=\plotSamples] {(3*x^3 + 9*x^2 + 9*x+3) / (2*x^3 + 6*x^2 + 5*x + 2)};
        \addplot[cyan,very thick, domain=3.1527576:50, samples=\plotSamples] {(2*x^2 + 2*x+2) / ((x+1)^2)};
        \addplot[cyan,very thick, domain=0.02:0.3171826, samples=\plotSamples] {(2*x^2 + 2*x+2) / ((x+1)^2)};
        \addplot[cyan,very thick, domain=2:3.1527576, samples=\plotSamples] {
        (2*x^3 + 5*x^2 + 5*x + 2 + 3*x^2 + x^3) / (2*x^3 + 4*x^2 +4*x + 2)};
        \addplot[cyan,very thick, domain=0.3171826:0.5, samples=\plotSamples] {
        (2*x^3 + 5*x^2 + 5*x + 2 + 3*x + 1) / (2*x^3 + 4*x^2 +4*x + 2)};
        
        \addplot[purple,very thick, domain=0.02:0.4655, samples=\plotSamples] {(2*x^3 + 4*x^2 + 4*x + 2) / (x^4+3*x^3+2*x^2+x+2)};
        \addplot[purple,very thick, domain=1:2.1479, samples=\plotSamples] {(2*x^4+6*x^3+8*x^2+6*x+2) / (2*x^4+3*x^3+4*x^2+4*x+2)};
        \addplot[purple,thick, domain=0.4655:1, samples=\plotSamples] {(2*x^4+6*x^3+8*x^2+6*x+2) / (2*x^4+4*x^3+4*x^2+3*x+2)};

        \addplot[orange,dashed,very thick, domain=0.5:1, samples=\plotSamples] {1 + (2*x)/(2*x^2 + x + 1)};
        \addplot[orange,dashed,very thick, domain=0.02:0.5, samples=\plotSamples] {1 + 1/(2*x+1)};
      \end{semilogxaxis}
    \end{tikzpicture}
  \end{center}
  \caption{Price of anarchy for different types of weighted affine two-player congestion games depending on the weight ratio $w_1 / w_2$.}
  \label{fig_plot}
\end{figure}

\paragraph{Outline.}
We first present our results on the price of anarchy for simultaneous, symmetric simultaneous and sequential games in \cref{sec_sim,sec_sym,sec_seq}.
Our methodology for obtaining these results is explained in \cref{sec_lp} and a few concluding remarks are made in \cref{sec_conclusion}.

\section{Results for simultaneous games}
\label{sec_sim}

\begin{figure}[H]
  \begin{center}
    \begin{tikzpicture}[
      node/.style={draw,thick,circle,inner sep=0mm,minimum size=4mm},
      arc/.style={draw,->,thick}
    ]
      \begin{scope}
        \node[node] (a) at (0,0.0) {$a$};
        \node[node] (b) at (2,+1) {$b$};
        \node[node] (c) at (2,-1) {$c$};
        \node[node] (d) at (5,+1) {$d$};
        \node[node] (e) at (5,-1) {$e$};
        \draw[arc] (a) to node[above left] {$w_1 w_2 + w_2^2$} (b);
        \draw[arc] (a) to node[below left] {$w_1 w_2$} (c);
        \draw[arc] (b) to node[above] {$0$} (d);
        \draw[arc] (b) to node[below, near start] {$0$} (e);
        \draw[arc] (c) to node[below, near end] {$0$} (d);
        \draw[arc] (c) to node[below] {$w_1^2 + w_1 w_2 + w_2^2$} (e);
      \end{scope}

      \node[anchor=west] at (7,0) {
        \begin{tabular}{l|l|l}
          Player    & \colorOne{1}            & \colorTwo{2} \\ \hline
          Source    & \colorOne{$a$}          & \colorTwo{$a$} \\ \hline
          Sink      & \colorOne{$d$}          & \colorTwo{$e$} \\ \hline
          Possible  & \colorOne{$a$-$b$-$d$}, & \colorTwo{$a$-$b$-$e$},  \\
          actions   & \colorOne{$a$-$c$-$d$}  & \colorTwo{$a$-$c$-$e$}
        \end{tabular}
      };
    \end{tikzpicture}
  \end{center}
  \caption{%
    Simultaneous network routing game with weights $w_1, w_2$ for which the price of anarchy for uniform costs (as well as that for proportional costs; see~Section~\ref{sec_sim_prop}) is worst possible in case $w_1 \geq w_2$.
    Depicted are the network with cost coefficients (left) as well as the sources, sinks and possible actions of each player (right).
  }
  \label{fig_sim_uniprop_instance}
\end{figure}
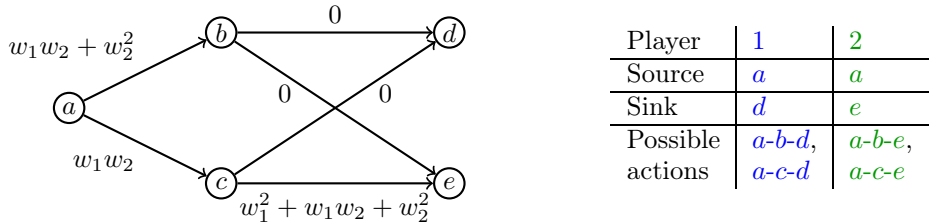

\subsection{Uniform costs}
\label{sec_sim_uni}

\begin{theorem}
  \label{thm_sim_uni_specific_weights}
  Let $w_1, w_2 \geq 0$ with $w_1 + w_2 > 0$.
  The price of anarchy for simultaneous uniformly weighted two-player congestion/network routing games with weights $w_1, w_2$ is equal to
  \[
    1 + \frac{ 2w_1w_2 + \max(w_1,w_2)^2 }{ w_1^2 + w_1w_2 + w_2^2 }.
  \]
  It is attained by the weighted network routing game described in \cref{fig_sim_uniprop_instance}.
\end{theorem}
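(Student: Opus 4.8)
The plan is to pin down the exact value in two moves: a lower bound that comes straight from the instance of \cref{fig_sim_uniprop_instance}, and a matching upper bound obtained by a linear-programming duality argument of the kind sketched in the introduction. Throughout I would assume $w_1 \ge w_2$ — renaming the two players turns the instance of \cref{fig_sim_uniprop_instance} into the corresponding one for $w_1 \le w_2$, and the target value
\[
  \rho \;\coloneqq\; 1 + \frac{w_1^2 + 2w_1w_2}{w_1^2 + w_1w_2 + w_2^2} \;=\; \frac{2w_1^2 + 3w_1w_2 + w_2^2}{w_1^2 + w_1w_2 + w_2^2}
\]
is symmetric in $w_1,w_2$ and equals $1 + \frac{2w_1w_2 + \max(w_1,w_2)^2}{w_1^2+w_1w_2+w_2^2}$ under this assumption.

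For the lower bound I would analyse the network of \cref{fig_sim_uniprop_instance} directly (each arc has $\alpha=0$ and $\beta$ equal to the displayed label). First one checks that the profile in which player~1 routes along $a$-$b$-$d$ and player~2 along $a$-$c$-$e$ is a pure Nash equilibrium: against the other player's fixed action, a player's two available actions happen to have exactly equal cost, so no unilateral deviation helps; this profile has total cost $2w_1^2w_2 + 3w_1w_2^2 + w_2^3$. Comparing the four action profiles then shows that the social optimum is player~1 on $a$-$c$-$d$ and player~2 on $a$-$b$-$e$, with total cost $w_1^2w_2 + w_1w_2^2 + w_2^3$. The quotient of these two quantities is precisely $\rho$, which proves the lower bound and shows it is attained.

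For the upper bound, fix an arbitrary instance with weights $w_1,w_2$, a pure Nash equilibrium $E=(E_1,E_2)$ and a social optimum $O=(O_1,O_2)$. Classify each resource $r$ by the four bits recording whether $r$ lies in $E_1,E_2,O_1,O_2$, and sum $\alpha_r$ and $\beta_r$ separately over resources of equal type; this yields finitely many (at most $16$ for each coefficient family) nonnegative variables, in terms of which $C(E)$, $C(O)$, $C_1(O_1,E_2)$, $C_1(E)$, $C_2(E_1,O_2)$ and $C_2(E)$ are linear forms whose coefficients are explicit polynomials in $w_1,w_2$ (depending only on the type). The equilibrium conditions I would use are the single deviations $C_1(E)\le C_1(O_1,E_2)$ and $C_2(E)\le C_2(E_1,O_2)$. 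The plan is then to exhibit nonnegative multipliers $\lambda,\mu$ — read off from the optimal dual of the worst-case LP, here $\mu=1$ and $\lambda=\tfrac{w_1(2w_1+w_2)}{w_1^2+w_1w_2+w_2^2}$ — such that
\[
  C(E) \;\le\; \rho\, C(O) \;+\; \lambda\bigl(C_1(O_1,E_2)-C_1(E)\bigr) \;+\; \mu\bigl(C_2(E_1,O_2)-C_2(E)\bigr)
\]
holds coefficientwise in the type variables. Since both bracketed terms are nonnegative for any equilibrium, this implies $C(E)\le\rho\, C(O)$, i.e.\ the claimed bound. The resource types that actually occur in \cref{fig_sim_uniprop_instance} are precisely those for which this type-wise inequality is tight, which is how $\rho$, $\lambda$ and $\mu$ are found in the first place.

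The main obstacle is the calibration and verification in the upper bound: one must guess the dependence of $\lambda$ (and of the worst-case instance) on the ratio $w_1/w_2$ from numerically solved LPs and then verify the resulting finite list of polynomial inequalities — one per resource type for each of the $\alpha$- and $\beta$-parts. The inequalities from load-dependent ($\beta$-)costs are the substantive ones; those from constant ($\alpha$-)costs collapse to mild conditions (e.g.\ $\rho\ge 1$, $\rho+\lambda\ge 2$, and $\lambda+\mu\ge 2$, the last of which is equivalent to $w_1\ge w_2$), so constant resource costs never raise the price of anarchy. Minor additional points are to confirm that the equilibrium identified in \cref{fig_sim_uniprop_instance} is the worst one among its equilibria, and to handle the degenerate case $w_2=0$.
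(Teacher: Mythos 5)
Your overall strategy coincides with the paper's: the lower bound is exactly the paper's argument on the instance of \cref{fig_sim_uniprop_instance} (same equilibrium $a$-$b$-$d$, $a$-$c$-$e$, same reference profile $a$-$c$-$d$, $a$-$b$-$e$, same costs), and the upper bound via a per-resource-type dual certificate using only the two Nash deviations to $O_1$ resp.\ $O_2$ is precisely the LP-duality mechanism of \cref{sec_lp}, where your sixteen ``types'' are the paper's resources indexed by subsets of the four labels. Your multipliers $\mu=1$ and $\lambda = w_1(2w_1+w_2)/(w_1^2+w_1w_2+w_2^2)$ are the right ones: the three types carrying positive $\beta$ in \cref{fig_sim_uniprop_instance} come out exactly tight and the remaining types are slack under $w_1 \ge w_2$.

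There is, however, a sign inversion in the certificate, and as literally written the upper-bound step fails. You add $\lambda\bigl(C_1(O_1,E_2)-C_1(E)\bigr)+\mu\bigl(C_2(E_1,O_2)-C_2(E)\bigr)$ to the right-hand side and then invoke the nonnegativity of these brackets (which is what the Nash conditions give) to conclude $C(E)\le\rho\,C(O)$ --- but adding nonnegative quantities to the right-hand side only weakens the inequality; they cannot be dropped. The displayed inequality is also false coefficientwise: a resource contained only in $E_2$ (such as arc $c$-$e$ in the worst-case instance) contributes $w_2\beta_r$ to $C(E)$, contributes $-\mu w_2\beta_r$ through your correction terms, and contributes nothing to $\rho\,C(O)$, so the type-wise check cannot succeed. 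What you need is the certificate $C(E)\le\rho\,C(O)+\lambda\bigl(C_1(E)-C_1(O_1,E_2)\bigr)+\mu\bigl(C_2(E)-C_2(E_1,O_2)\bigr)$ to hold coefficientwise; then the Nash inequalities make the last two terms nonpositive and they may be discarded. With that sign flip your $\lambda,\mu,\rho$ verify, and the rest of your plan (the $\alpha$-conditions, the tightness pattern identifying the worst-case types, and the reduction to $w_1\ge w_2$) is sound and matches the paper.
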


The values of the expression in \cref{thm_sim_uni_specific_weights} are plotted in \cref{fig_plot}.
By taking the supremum over all feasible weight pairs, we also obtain the following result for the case in which the weights are not fixed.

\begin{corollary}
  \label{thm_sim_uni_all_weights}
  The price of anarchy for simultaneous uniformly weighted two-player congestion/network routing games is equal to $1 + 2/\sqrt{3} \approx 2.155$, which is attained for $w_1 / w_2 = 1 + \sqrt{3} \approx 2.732$ and for $w_2 / w_1 = 1 + \sqrt{3} \approx 2.732 $.
\end{corollary}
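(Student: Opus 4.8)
By \cref{thm_sim_uni_specific_weights}, the price of anarchy of the whole class equals the supremum, over all feasible weight pairs $(w_1,w_2)$ with $w_1+w_2>0$, of
\[
  F(w_1,w_2) \coloneqq 1 + \frac{2w_1w_2 + \max(w_1,w_2)^2}{w_1^2 + w_1w_2 + w_2^2}.
\]
The plan is to reduce this to a one-variable optimization. The expression $F$ is homogeneous of degree $0$ in $(w_1,w_2)$ and symmetric under swapping the two coordinates, so I would first dispose of the degenerate cases $w_1=0$ or $w_2=0$ (where $F=1+1=2$), and then assume $w_1\ge w_2>0$, set $t\coloneqq w_1/w_2\ge 1$, and study
\[
  f(t) \coloneqq F(t,1) = 1 + \frac{t^2 + 2t}{t^2 + t + 1} \qquad\text{on } [1,\infty).
\]
The sought bound is then $\sup_{t\ge 1} f(t)$, and by symmetry the same supremum is approached from the branch $w_2/w_1=t$.

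Next I would carry out the calculus. Writing $g(t) = (t^2+2t)/(t^2+t+1)$, a direct differentiation gives numerator of $g'(t)$ equal to $(2t+2)(t^2+t+1) - (t^2+2t)(2t+1) = -t^2 + 2t + 2$, so the only critical point in $(1,\infty)$ solves $t^2 - 2t - 2 = 0$, namely $t^\star = 1+\sqrt{3}$. Since $-t^2+2t+2$ is positive at $t=1$ (value $3$) and negative for $t>t^\star$, the function $g$ is strictly increasing on $[1,t^\star]$ and strictly decreasing on $[t^\star,\infty)$; hence $t^\star$ is the unique maximizer, and in particular the supremum is attained (it is a maximum), which settles the ``attained for $w_1/w_2 = 1+\sqrt 3$'' claim. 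For completeness one checks the boundary/limit values $f(1) = 2$ and $f(t)\to 2$ as $t\to\infty$, confirming the interior point dominates.

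Finally I would evaluate $f$ at $t^\star=1+\sqrt3$. Using $(1+\sqrt3)^2 = 4+2\sqrt3$, one gets $t^{\star 2}+2t^\star = 6+4\sqrt3$ and $t^{\star2}+t^\star+1 = 6+3\sqrt3$, so
\[
  f(t^\star) = 1 + \frac{6+4\sqrt3}{6+3\sqrt3} = 1 + \frac{2\sqrt3}{3} = 1 + \frac{2}{\sqrt3} \approx 2.155,
\]
where the middle equality follows since $3(6+4\sqrt3) = 18+12\sqrt3 = 2\sqrt3\,(6+3\sqrt3)$. Combining this with the degenerate cases (value $2$, strictly smaller) and the symmetry of $F$ yields the stated value together with both attaining ratios $w_1/w_2 = 1+\sqrt3$ and $w_2/w_1 = 1+\sqrt3$. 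There is no substantial obstacle here: the whole argument is an elementary single-variable maximization once \cref{thm_sim_uni_specific_weights} is in hand; the only points requiring a little care are handling the zero-weight boundary and noting that the supremum over the class is actually achieved.
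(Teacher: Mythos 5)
Your proposal is correct and follows essentially the same route as the paper's proof: exploit symmetry and scale-invariance to reduce to maximizing $f(t)=1+\tfrac{t^2+2t}{t^2+t+1}$ over $t\ge 1$, locate the maximizer at $t=1+\sqrt{3}$, and evaluate. You merely spell out the calculus and boundary checks that the paper leaves as "easily verified."
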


\begin{proof}[Proof of \cref{thm_sim_uni_all_weights}]
  By symmetry with respect to weights we can assume $w_1 \geq w_2$.
  Moreover, scaling $w_1$ and $w_2$ by the same positive scalar does not change the price of anarchy and hence we can assume $w_2 = 1$.
  Thus, the maximum is equal to that of the function $f(x) = 1 + \frac{ x^2+2x }{ x^2+x+1 } = 2 + \frac{ x - 1 }{ x^2+x+1 }$ over $x \in [1,\infty)$.
  It is easily verified that $f$ attains its global maximum at $x = 1 + \sqrt{3}$.
  \qed
\end{proof}

\begin{proof}[Proof of the lower bound in \cref{thm_sim_uni_specific_weights}]
  Consider the network routing game described in \cref{fig_sim_uniprop_instance}.
  The case in which the players choose the paths $a$-$c$-$d$ and $a$-$b$-$e$, respectively, has cost $w_1^2 w_2 + w_1 w_2^2 + w_2^3$.
  Hence, this value is an upper bound on the cost of the social optimum.

  We claim that the actions $a$-$b$-$d$ and $a$-$c$-$e$ (for players $1$ and $2$, respectively) constitute a Nash equilibrium.
  Player~1 has cost $w_1^2 w_2 + w_1 w_2^2$ and changing to $a$-$c$-$d$ would incur the same cost of $w_1 w_2 (w_1 + w_2)$ since arc $a$-$c$ would be used by both players.
  Player~2 has cost $w_1^2 w_2 + 2w_1 w_2^2 + w_2^3$ and changing to $a$-$b$-$e$ would incur the same cost of $w_1 w_2 + w_2^2 (w_1 + w_2)$ since arc $a$-$b$ would be used by both players.
  The total cost of the equilibrium is $2w_1^2 w_2 + 3w_1 w_2^2 + w_2^3$.
  This yields for the price of anarchy a lower bound of $\frac{(2w_1^2 + 3w_1 w_2 + w_2^2) \cdot w_2}{(w_1^2 + w_1 w_2 + w_2^2) \cdot w_2} = 1 + \frac{ w_1^2 + 2w_1 w_2 }{ w_1^2 + w_1 w_2 + w_2^2 }$, which concludes the proof.
  \qed
\end{proof}

\subsection{Proportional costs}
\label{sec_sim_prop}

\begin{theorem}
  \label{thm_sim_prop_specific_weights}
  Let $w_1, w_2 \geq 0$ with $w_1 + w_2 > 0$.
  The price of anarchy for simultaneous proportionally weighted two-player congestion/network routing games with weights $w_1, w_2$ is equal to
  \[
    1 + w_1w_2 \frac{ w_1 + w_2 + \max(w_1,w_2) }{ w_1^3 + w_2^3 + w_1w_2 \min(w_1, w_2) }.
  \]
  It is attained by the weighted network routing game described in \cref{fig_sim_uniprop_instance}.
\end{theorem}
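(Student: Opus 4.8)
Because the price of anarchy is invariant under relabelling the two players and the claimed value
\[
  \rho(w_1,w_2) \coloneqq 1 + w_1w_2\,\frac{ w_1 + w_2 + \max(w_1,w_2) }{ w_1^3 + w_2^3 + w_1w_2 \min(w_1, w_2) }
\]
is symmetric in $w_1,w_2$, the plan is to assume $w_1 \ge w_2 > 0$ throughout and to show that $\rho(w_1,w_2)$ is both a lower and an upper bound on the price of anarchy (the case $w_2=0$ then following by continuity, the case $w_2 > w_1$ by relabelling). For the lower bound I would reuse the very instance of \cref{fig_sim_uniprop_instance}, exactly as in \cref{thm_sim_uni_specific_weights}. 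The key remark is that $C_1^{\text{prop}}(A) = w_1\,C_1^{\text{uni}}(A)$ and $C_2^{\text{prop}}(A) = w_2\,C_2^{\text{uni}}(A)$: since a unilateral deviation of player~$i$ leaves the factor $w_i$ untouched, a profile satisfies the Nash conditions \eqref{eq_lp_nash_change_one}--\eqref{eq_lp_nash_change_two} for proportional costs if and only if it does for uniform costs, so the profile in which the players choose $a$-$b$-$d$ and $a$-$c$-$e$ is again a Nash equilibrium. Multiplying the per-player uniform costs computed in the proof of \cref{thm_sim_uni_specific_weights} by $w_1$ resp.\ $w_2$ gives equilibrium cost $w_1^3w_2 + 2w_1^2w_2^2 + 2w_1w_2^3 + w_2^4$, while the profile $a$-$c$-$d$, $a$-$b$-$e$ has cost $w_1^3w_2 + w_1w_2^3 + w_2^4 = w_2\,(w_1^3 + w_2^3 + w_1w_2\cdot w_2)$ and hence upper-bounds the social optimum; their quotient is exactly $\rho(w_1,w_2)$ for $w_1 \ge w_2$.

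For the matching upper bound I would follow the duality-based scheme of \cref{sec_lp}. Fix any Nash equilibrium $(\labelOneEqui,\labelTwoEqui)$ and any social optimum $(\labelOneOpt,\labelTwoOpt)$, and classify each relevant resource $r\in R$ by the pattern of its membership in these four action sets; this partitions the resources into at most $15$ classes, on each of which the loads in the profiles $(\labelOneEqui,\labelTwoEqui)$, $(\labelOneOpt,\labelTwoOpt)$, $(\labelOneOpt,\labelTwoEqui)$ and $(\labelOneEqui,\labelTwoOpt)$ are fixed $\{0,1\}$-combinations of $w_1,w_2$. Consequently $C(\labelOneEqui,\labelTwoEqui)$, $C(\labelOneOpt,\labelTwoOpt)$, $\costOne(\labelOneOpt,\labelTwoEqui)$ and $\costTwo(\labelOneEqui,\labelTwoOpt)$ become nonnegative linear forms in the per-class aggregates $\sum\alpha_r$ and $\sum\beta_r$. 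Relaxing the Nash conditions to the single deviations $\labelOneEqui\to\labelOneOpt$ and $\labelTwoEqui\to\labelTwoOpt$ and adding $\lambda_1\ge 0$ resp.\ $\lambda_2\ge 0$ times them to the identity $C(\labelOneEqui,\labelTwoEqui) = \costOne(\labelOneEqui,\labelTwoEqui) + \costTwo(\labelOneEqui,\labelTwoEqui)$ yields
\[
  C(\labelOneEqui,\labelTwoEqui) \le (1-\lambda_1)\,\costOne(\labelOneEqui,\labelTwoEqui) + (1-\lambda_2)\,\costTwo(\labelOneEqui,\labelTwoEqui) + \lambda_1\,\costOne(\labelOneOpt,\labelTwoEqui) + \lambda_2\,\costTwo(\labelOneEqui,\labelTwoOpt),
\]
so it suffices to choose $\lambda_1,\lambda_2$ (as functions of $w_1,w_2$) for which the right-hand side is at most $\rho(w_1,w_2)\,C(\labelOneOpt,\labelTwoOpt)$. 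As this is an inequality between nonnegative linear forms in $\sum\alpha_r$ and $\sum\beta_r$, it reduces to one scalar polynomial inequality in $w_1,w_2$ per class and per coefficient type ($\alpha$ or $\beta$). Since the instance of \cref{fig_sim_uniprop_instance} attains ratio $\rho$ with both relaxed Nash inequalities tight, LP duality guarantees that suitable $\lambda_1,\lambda_2$ exist; concretely I would solve the finite LP of \cref{sec_lp} for several weight ratios, read off closed forms for $\lambda_1(w_1,w_2)$ and $\lambda_2(w_1,w_2)$, and verify the scalar inequalities algebraically.

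The main obstacle is precisely this last coefficientwise verification. The $\alpha$-parts only impose mild linear conditions on the multipliers (essentially $1\le\lambda_i\le\rho$) and are not the bottleneck; the binding constraints are the $\beta$-parts, which see the congestion. The appearance of $\max$ and $\min$ in $\rho(w_1,w_2)$ is a warning that the tight resource classes — and hence the closed forms of $\lambda_1,\lambda_2$ — look different in the regimes $w_1\ge w_2$ and $w_1\le w_2$, so that even after using the player-swap symmetry to restrict to $w_1\ge w_2$, the remaining finite family of polynomial inequalities in $w_1,w_2$ has to be checked with some care. Everything else — existence of a pure equilibrium, finiteness of the worst-case instance, and validity of the relaxation — is inherited from \cref{sec_lp} and the results cited there.
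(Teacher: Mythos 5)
Your proposal is correct and follows essentially the same route as the paper: the lower bound uses the identical instance, equilibrium profile, and social-optimum bound (your observation that $C_i^{\text{prop}} = w_i\, C_i^{\text{uni}}$ lets you inherit the Nash property rather than recompute it, a minor shortcut), and the upper bound is the same LP-duality certificate scheme of \cref{sec_lp} with the same deferred coefficientwise verification that the paper itself only carries out in detail for the sequential uniform case.
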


We again obtain the result for arbitrary weights.

\begin{corollary}
  \label{thm_sim_prop_all_weights}
  The price of anarchy for simultaneous proportionally weighted two-player congestion/network routing games is approximately $2.0411$, which is attained for $w_1 / w_2 \approx 1.2704$ and for $w_2 / w_1 \approx 1.2704$.
\end{corollary}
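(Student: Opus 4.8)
The plan is to mirror the proof of \cref{thm_sim_uni_all_weights}: reduce the two-variable quantity of \cref{thm_sim_prop_specific_weights} to a one-variable optimization by exploiting its two invariances, and then locate the maximum. By the symmetry in the roles of the players we may assume $w_1 \ge w_2$, and since the price of anarchy is unchanged under rescaling both weights by a common positive factor we may normalize $w_2 = 1$ and set $x \coloneqq w_1 \ge 1$. Substituting $\max(w_1,w_2) = x$ and $\min(w_1,w_2) = 1$ into the formula of \cref{thm_sim_prop_specific_weights}, the task becomes that of maximizing
\[
  g(x) \coloneqq 1 + \frac{2x^2 + x}{x^3 + x + 1}
\]
over $x \in [1,\infty)$; the branch $w_2 \ge w_1$ is handled identically and produces the mirror-image optimizer $w_2/w_1 = x^\star$.

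Next I would differentiate $g$. Writing $g = 1 + N/D$ with $N = 2x^2+x$ and $D = x^3+x+1$, the numerator of $g'$ equals $N'D - ND' = -q(x)$, where $q(x) \coloneqq 2x^4 + 2x^3 - 2x^2 - 4x - 1$, so the sign of $g'$ is the sign of $-q$. One checks $g(1) = 2$, $q(1) = -3 < 0$, and $q(x) \to +\infty$, so $g$ is increasing at $x=1$ and attains an interior maximum. The key structural step is unimodality: since $q''(x) = 24x^2 + 12x - 4 > 0$ for all $x \ge 1$ and $q'(1) = 6 > 0$, the derivative $q'$ is positive throughout $[1,\infty)$, hence $q$ is strictly increasing there and has a unique root $x^\star \in (1,\infty)$. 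Consequently $g' > 0$ on $[1,x^\star)$ and $g' < 0$ on $(x^\star,\infty)$, so $x^\star$ is the unique global maximizer of $g$ on $[1,\infty)$.

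Finally I would evaluate. The quartic $q$ has no rational root, so $x^\star$ carries no convenient radical expression; numerically $x^\star \approx 1.2704$, and substituting this into $g$ gives $g(x^\star) \approx 2.0411$. Together with the symmetric branch (giving the same value at $w_2/w_1 = x^\star$) and with the fact that the bound in \cref{thm_sim_prop_specific_weights} is \emph{exact} for each fixed weight pair, taking the supremum over all admissible $(w_1,w_2)$ establishes the corollary. The main obstacle is precisely the uniqueness/unimodality step: a priori $g$ could have several critical points on $[1,\infty)$, and ruling this out via the convexity argument for $q$ (the combination $q'' > 0$ and $q'(1) > 0$ forcing $q$ to be increasing) is what makes the ``exact'' claim rigorous; a secondary, purely expository point is that the optimal weight ratio and the optimal value are most faithfully recorded as the solution of $q(x) = 0$ together with tight numerical enclosures rather than as closed-form constants.
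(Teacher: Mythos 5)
Your proposal is correct and follows essentially the same route as the paper: the same symmetry and scaling reduction to $w_2=1$, the same single-variable function $g(x)=1+\frac{2x^2+x}{x^3+x+1}$, and the same derivative numerator (your $-q(x)$ is exactly the paper's $-2x^4-2x^3+2x^2+4x+1$), leading to the same numerical optimizer $x^\star\approx 1.2704$ and value $\approx 2.0411$. The only difference is that you justify uniqueness of the relevant critical point via convexity of $q$ on $[1,\infty)$, a small but welcome addition of rigor where the paper merely asserts ``the only relevant root.''
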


\begin{proof}[Proof of \cref{thm_sim_prop_all_weights}]
  By symmetry with respect to weights we can assume $w_1 \geq w_2$.
  Moreover, scaling $w_1$ and $w_2$ by the same positive scalar does not change the price of anarchy and hence we can assume $w_2 = 1$.
  Thus, the maximum is equal to that of the function $f(x) = 1 + x \frac{ 2x + 1 }{ x^3 + x + 1 }$ whose derivative is
  $f'(x) = \frac{ -2x^4 - 2x^3 + 2x^2 + 4x + 1 }{ (x^3 + x+ 1)^2}$.
  The exact expression for the only relevant root of $f'$ is complicated, and hence we provide only the numerical solution $x^\star \approx 1.2704$ which yields an approximate price of anarchy of $2.0411$.
  \qed
\end{proof}

\begin{proof}[Proof of the lower bound in \cref{thm_sim_prop_specific_weights}]
  Consider the network routing game described in \cref{fig_sim_uniprop_instance}.
  The case in which the players choose the paths $a$-$c$-$d$ and $a$-$b$-$e$, respectively, has cost $w_1^3 w_2 + w_1 w_2^3 + w_2^4$.
  Hence, this value is an upper bound on the cost of the social optimum.

  We claim that the actions $a$-$b$-$d$ and $a$-$c$-$e$ (for players $1$ and $2$, respectively) constitute a Nash equilibrium.
  Player~1 has cost $w_1^3 w_2 + w_1^2 w_2^2$ and changing to $a$-$c$-$d$ would incur the same cost of $w_1^2 w_2 (w_1 + w_2)$ since arc $a$-$c$ would be used by both players.
  Player~2 has cost $w_1^2w_2^2 + 2w_1w_2^3 + w_2^4$ and changing to $a$-$b$-$e$ would incur the same cost of $(w_1 w_2 + w_2^2) (w_1 + w_2) w_2$ since arc $a$-$b$ would be used by both players.
  The total cost of the equilibrium is $w_1^3 w_2 + 2w_1^2 w_2^2 + 2w_1w_2^3 + w_2^4$.
  This yields for the price of anarchy a lower bound of $\frac{ (w_1^3 + 2w_1^2 w_2 + 2w_1w_2^2 + w_2^3) w_2 }{ (w_1^3 + w_1 w_2^2 + w_2^3) w_2 } = 1 + w_1 w_2 \frac{ w_1 + 2w_2 }{ w_1^3 + w_2^3 + w_1 w_2^2  }$, which concludes the proof.
  \qed
\end{proof}

\section{Results for symmetric simultaneous games}
\label{sec_sym}

\subsection{Uniform costs}
\label{sec_sym_uni}


For uniform cost functions we have an irrational weight ratio at which the behavior changes.
We denote it by $\tau \coloneqq \frac{1}{3} (2  \sqrt[3]{ 62 - 3\sqrt{183} } + \sqrt[3]{ 62 + 3\sqrt{183} } \approx 3.1527$.

\begin{theorem}
  \label{thm_sym_uni_specific_weights}
  Let $w_1, w_2 \geq 0$ with $w_1 + w_2 > 0$.
  The price of anarchy for symmetric simultaneous uniformly weighted two-player congestion games with weights $w_1, w_2$ is equal to
  \begin{enumerate}[label=(\alph*)]
  \item
    \label{sym_uni_weights1}
    $\frac{3w_1^3 + 9w_1^2w_2 + 9 w_1w_2^2 + 3w_2^2}{ 2w_1^3 + 5w_1^2w_2 + 5w_1w_2^2 + 2w_2^3 + w_1 w_2 \min(w_1,w_2) }$ if $\frac{1}{2}w_2 \leq w_1 \leq 2w_2$,
  \item
    \label{sym_uni_weights2}
    $\frac{2w_1^3 + 5w_1^2w_2 + 5w_1w_2^2 + 2w_2^3 + 3\max(w_1^2w_2 + w_1^3,w_1w_2^2 + w_2^3) }{2w_1^3 + 4w_1^2w_2 + 4w_1w_2^2 + 2w_2^3}$ if $2w_2 \leq w_1 \leq \tau w_2$ or if $\frac{1}{\tau} w_2 \leq w_1 \leq \frac{1}{2}w_2$.
  \item
    \label{sym_uni_weights3}
    $\frac{2w_1^2 + 2w_1w_2 + 2w_2^2 }{ (w_1+w_2)^2 }$ if $w_1 \geq \tau w_2$ or if $w_1 \leq \frac{1}{\tau} w_2$.
  \end{enumerate}
\end{theorem}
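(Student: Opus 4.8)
The plan is to prove matching lower and upper bounds separately, after reducing to a one-parameter family. Since the claimed expression is invariant under swapping $w_1$ and $w_2$ (every occurrence is through $\min$, $\max$, or a symmetric polynomial), I may assume $w_1 \geq w_2$; and since scaling both weights by a positive constant changes neither the price of anarchy nor the formula, I normalize $w_2 = 1$ and write $x = w_1 \geq 1$. The three regimes of the statement then become $x \in [1,2]$ for case~(a), $x \in [2,\tau]$ for case~(b), and $x \geq \tau$ for case~(c), and it suffices to show, in each regime and for each weight in it, that every Nash equilibrium of every symmetric affine congestion game costs at most $\rho(x)$ times a social optimum (the upper bound) and that this ratio is attained (the lower bound).

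For the lower bound I would, in each regime, exhibit an explicit symmetric congestion game together with a Nash equilibrium whose cost/optimum ratio equals $\rho(x)$. In a symmetric two-player game the equilibrium outcome $(\labelOneEqui,\labelTwoEqui)$ and a social optimum $(\labelOneOpt,\labelTwoOpt)$ involve at most four actions of the common action set $\actions$; a resource is then fully described by which of these (at most) four actions contain it, giving at most fifteen resource types, and I would prescribe each type's coefficients $\alpha_r,\beta_r$ as explicit functions of $w_1,w_2$. Verification has two parts: that the displayed ratio is correct (a polynomial computation in $w_1,w_2$) and that the profile is a Nash equilibrium, i.e.\ that \eqref{eq_lp_nash} holds. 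Note that, unlike in the asymmetric games of \cref{sec_sim}, each player may here deviate to \emph{any} of the four special actions — in particular to the other player's equilibrium action and to either optimum action — so strictly more inequalities must be checked; it is precisely the switch in which deviation constraint becomes tight, as $x$ crosses $2$ and then $\tau$, that forces the piecewise form of $\rho$.

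For the upper bound I would invoke the linear-programming/duality mechanism described in the introduction. Because worst-case instances admit a bounded description — the finitely many resource types above suffice, by the arguments of \cite{DeJong2014,JongU19} — the price of anarchy for fixed $w_1,w_2$ is the optimum value of the linear program that maximizes $C(A^{\text{equi}})$ over nonnegative coefficients $(\alpha_r,\beta_r)_r$ subject to $C(A^{\text{opt}}) \le 1$ and to the Nash inequalities \eqref{eq_lp_nash} for all four deviations of each player. Solving this LP numerically at several ratios $x$ suggests closed-form optimal primal solutions and, more importantly, closed-form optimal dual solutions as piecewise-rational functions of $w_1,w_2$ with breakpoints at $x=2$ and $x=\tau$. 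Given such a dual solution, the upper bound reduces to an identity: the stated nonnegative combination of the Nash inequalities, added to suitable nonnegative multiples of the coefficient-nonnegativity constraints, must algebraically yield $C(A^{\text{equi}}) \le \rho(x)\,C(A^{\text{opt}})$ pointwise in the resource costs — a smoothness-type estimate that then holds for every symmetric affine congestion game, not only the bounded ones, while LP strong duality at the constructed instances certifies that $\rho(x)$ cannot be improved.

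The main obstacle is twofold. First, enumerating the configuration in the symmetric setting — the relevant resource types and the complete list of deviation constraints — is more delicate than in the asymmetric case, and the worst-case instance changes combinatorial type across the breakpoints (which is why the educated guess for primal and dual must be made regime by regime, possibly using an alternative equilibrium action in some regime). Second, pinning down the breakpoints exactly — especially $\tau$, which is the real root of a cubic and enters through the requirement that the dual multipliers remain nonnegative — and then carrying out the polynomial verification of the dual certificate in each of the three regimes is lengthy bookkeeping; it is routine in principle but is where essentially all of the work lies.
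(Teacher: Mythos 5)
Your proposal follows essentially the same route as the paper: reduce by symmetry and scaling to a single ratio parameter, prove the lower bound via explicit worst-case symmetric instances in each of the three regimes (with resources identified by their incidence pattern on the at most four labelled actions, and with the extra deviations to the other player's actions checked), and prove the upper bound via the finite LP whose optimal dual solution, guessed regime by regime, is turned into an algebraic certificate. This is precisely the paper's mechanism (its \cref{thm_correctness} plus the instances of \cref{fig_sym_uni_instance1,fig_sym_uni_instance2,fig_sym_uni_instance3}), so the remaining work is, as you say, the explicit constructions and the polynomial verification.
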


\begin{corollary}
  \label{thm_sym_uni_all_weights}
  The price of anarchy for symmetric simultaneous uniformly weighted two-player congestion games is equal to $2$.
\end{corollary}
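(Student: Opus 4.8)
The plan is to reduce the statement to a one-variable optimization and then verify, case by case, the bound supplied by \cref{thm_sym_uni_specific_weights}. First I would invoke the two standard invariances of the price of anarchy: swapping the two players maps the class of symmetric simultaneous games to itself, so we may assume $w_1 \ge w_2$; and multiplying both weights by a common positive scalar changes neither the games nor their pure Nash equilibria, so we may normalize $w_2 = 1$ and put $x \coloneqq w_1 \ge 1$. Under these assumptions $\max(w_1,w_2) = w_1$ and $\min(w_1,w_2) = w_2 = 1$, so the interval $\tfrac12 w_2 \le w_1 \le 2w_2$ of case~(a) becomes $1 \le x \le 2$, the interval $2w_2 \le w_1 \le \tau w_2$ of case~(b) becomes $2 \le x \le \tau$ (its other alternative, like the lower alternative of case~(c), being empty once $x \ge 1$), and $w_1 \ge \tau w_2$ from case~(c) becomes $x \ge \tau$. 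Hence, by \cref{thm_sym_uni_specific_weights}, the price of anarchy at weights $(x,1)$ equals a piecewise-rational function $\rho$ with three pieces: $\rho_1(x) = \tfrac{3x^3+9x^2+9x+3}{2x^3+5x^2+6x+2}$ on $[1,2]$, a rational function $\rho_2(x)$ of degree three over degree three on $[2,\tau]$ (obtained by resolving the maximum to its $w_1$-branch), and $\rho_3(x) = \tfrac{2x^2+2x+2}{(x+1)^2}$ on $[\tau,\infty)$.

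It then remains to show $\sup_{x \ge 1}\rho(x) = 2$. For the inequality $\le 2$, on each of the three intervals I would clear denominators in $\rho_j(x) < 2$; this reduces to the positivity on $(0,\infty)$ of the polynomial $2\cdot(\text{denominator}) - (\text{numerator})$, which in each case is a polynomial with nonnegative coefficients — for instance $x^3 + x^2 + 3x + 1$ for $\rho_1$, and the trivial $2x$ for $\rho_3$ since $\rho_3(x) = 2 - \tfrac{2x}{(x+1)^2}$ — and hence strictly positive there. For the matching bound $\ge 2$, the same identity $\rho_3(x) = 2 - \tfrac{2x}{(x+1)^2}$ shows $\rho_3(x) \to 2$ as $x \to \infty$, so the price of anarchy can be pushed arbitrarily close to $2$ by letting the weight ratio $w_1/w_2$ grow without bound (and, by the swap symmetry, also as $w_1/w_2 \to 0$). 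Combining the two directions yields that the price of anarchy of the class equals $2$; it is a supremum that is approached in the limit of extreme weight ratios, not attained at any finite pair $(w_1,w_2)$.

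I expect no genuine obstacle here; the only care needed is bookkeeping: substituting the $\min$ and $\max$ terms of \cref{thm_sym_uni_specific_weights} correctly under the normalization $w_1 \ge w_2 = 1$, noting which of the two alternatives in cases~(b) and~(c) survives, and then carrying out three elementary polynomial-positivity checks. The one conceptual point worth stating explicitly is that $2$ is a supremum rather than a maximum, which is immediate once one observes that each $\rho_j$ is strictly below $2$ on its interval while $\rho_3$ has the horizontal asymptote $y = 2$.
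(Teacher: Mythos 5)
Your overall strategy is the right one and matches what the paper does: the corollary is obtained by taking the supremum of the parametric bounds of \cref{thm_sym_uni_specific_weights} over the normalized ratio $x = w_1/w_2 \ge 1$, with the value $2$ approached (not attained) as $x \to \infty$ via the asymptote of the third piece. (The paper also exhibits an explicit family of network games approaching $2$ in \cref{thm_sym_uni_network}, but given \cref{thm_sym_uni_specific_weights} your route through $\rho_3(x) = 2 - \tfrac{2x}{(x+1)^2}$ is equally valid.) Your computations for cases (a) and (c) are correct.

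There is, however, a concrete problem in case (b), which is precisely the one check you assert but do not carry out. Reading the printed formula literally and "resolving the maximum to its $w_1$-branch" as you propose gives numerator $2x^3+5x^2+5x+2+3(x^3+x^2) = 5x^3+8x^2+5x+2$ over denominator $2x^3+4x^2+4x+2$, and then
\[
  2\cdot(2x^3+4x^2+4x+2) - (5x^3+8x^2+5x+2) \;=\; -x^3+3x+2 \;=\; -(x-2)(x+1)^2,
\]
which is \emph{nonpositive} on $[2,\tau]$; the resulting ratio exceeds $2$ on $(2,\tau)$ (e.g.\ it is about $2.17$ at $x=\tau$), so your claimed "polynomial with nonnegative coefficients" does not materialize and the upper bound $\le 2$ would fail on this piece. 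The resolution is that the printed expression in case (b) is inconsistent with the paper's own lower-bound proof and with \cref{fig_plot}, both of which use the numerator $2w_1^3+5w_1^2w_2+5w_1w_2^2+2w_2^3+\max(w_1^3+3w_1^2w_2,\,w_2^3+3w_1w_2^2)$, i.e.\ $\rho_2(x) = \tfrac{3x^3+8x^2+5x+2}{2x^3+4x^2+4x+2}$; with that version the check gives $2\cdot\text{denom}-\text{numer} = x^3+3x+2 > 0$ and your argument goes through. You should either perform this computation explicitly (which would have exposed the discrepancy) or state which version of the case-(b) formula you are using; as written, the one unverified step of your proof is exactly the one that breaks.
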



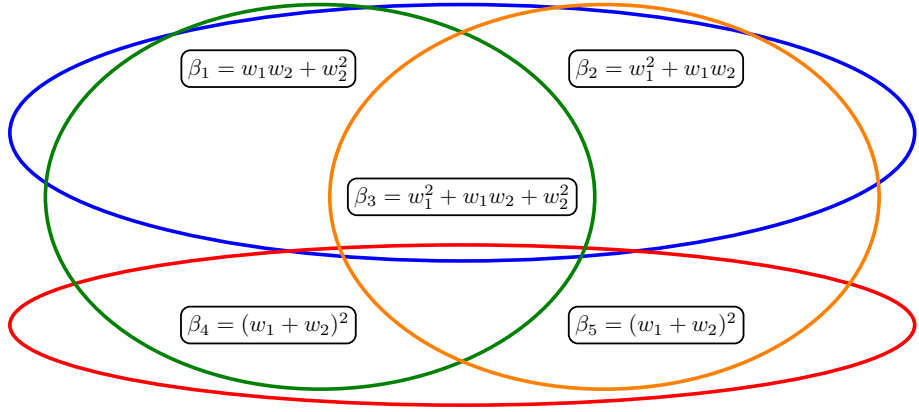
\begin{figure}[htpb]
  \begin{center}
    \scalebox{0.85}{%
    \begin{tikzpicture}[
      resource/.style={draw,thick,rounded corners,rectangle,inner sep=1mm,minimum size=2mm},
    ]
      \node[resource] (r1) at (-3,2) {$\beta_1 = w_1w_2 + w_2^2$};
      \node[resource] (r2) at (+3,2) {$\beta_2 = w_1^2 + w_1w_2$};
      \node[resource] (r3) at (0,0) {$\beta_3 = w_1^2 + w_1w_2 + w_2^2$};
      \node[resource] (r4) at (-3,-2) {$\beta_4 = (w_1+w_2)^2$};
      \node[resource] (r5) at (+3,-2) {$\beta_5 = (w_1+w_2)^2$};

      \node[ultra thick, draw=blue, ellipse, minimum width=140mm, minimum height=40mm] at (0.0,1.0) {}; 
      \node[ultra thick, draw=red, ellipse, minimum width=140mm, minimum height=25mm] at (0.0,-2.0) {}; 
      \node[ultra thick, draw=green!50!black, ellipse, minimum width=85mm, minimum height=60mm] at (-2.2,0) {}; 
      \node[ultra thick, draw=orange, ellipse, minimum width=85mm, minimum height=60mm] at (+2.2,0.0) {}; 
    \end{tikzpicture}%
    }
  \end{center}
  \caption{%
    Symmetric simultaneous uniformly weighted congestion game with weights $w_1, w_2$ for which the price of anarchy is worst possible in case $w_2 \leq w_1 \leq 2w_2$ holds.
    Depicted are the four possible actions indicating which of the five resources $1,2,\dotsc,5$ are chosen.
    Coefficients $\alpha_r$ have value $0$.
  }
  \label{fig_sym_uni_instance1}
\end{figure}

\begin{figure}[htpb]
  \begin{center}
    \begin{tikzpicture}[
      resource/.style={draw,rounded corners,thick,rectangle,inner sep=1mm,minimum size=2mm},
    ]
      \node[resource] (r1) at (3.1,0) {$\beta_1 = w_1^2 +2w_1w_2$};
      \node[resource] (r2) at (-4.8,-0.3) {$\beta_2 = w_1^2 - w_2^2$};
      \node[resource] (r3) at (-3,1.0) {$\beta_3 = w_1^2 + w_1w_2 +w_2^2$};
      \node[resource] (r4) at (2,2.1) {$\beta_4 = w_1^2 + w_1w_2 +w_2^2$};
      \node[resource] (r5) at (-3.2,-1.5) {$\alpha_5 = w_1^2w_2 + w_1w_2^2 + w_2^3$};

      \node[ultra thick, draw=red, ellipse, minimum width=40mm, minimum height=53mm, rotate=50] at (2.4,1.1) {};
      \node[ultra thick, draw=green!50!black, ellipse, minimum width=110mm, minimum height=20mm, rotate=-7] at (-0.45,0.5) {};
      \node[ultra thick, draw=orange, ellipse, minimum width=110mm, minimum height=33mm, rotate=12] at (-1.45,1.15) {};
      \node[ultra thick, draw=blue, ellipse, minimum width=67mm, minimum height=50mm,rotate=0] at (-3.5,-0) {};
    \end{tikzpicture}
  \end{center}
  \caption{%
    Symmetric simultaneous uniformly weighted congestion game with weights $w_1, w_2$ for which the price of anarchy is worst possible in case $2w_2 \leq w_1 \leq \tau w_2$ holds.
    Depicted are the four possible actions indicating which of the five resources $1,2,\dotsc,5$ are chosen.
    Coefficients $\alpha_r$ and $\beta_r$ that are not shown have value $0$.
  }
  \label{fig_sym_uni_instance2}
\end{figure}

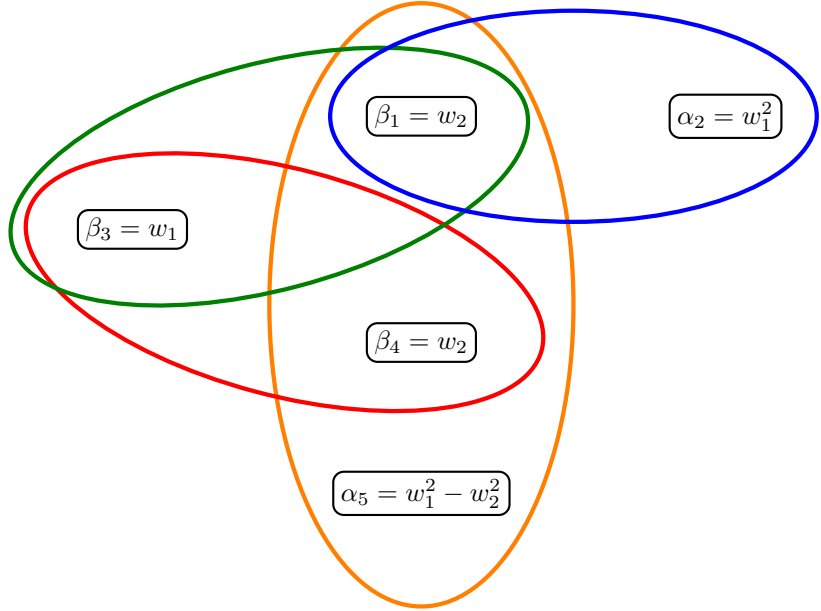
\begin{figure}[htpb]
  \begin{center}
    \begin{tikzpicture}[
      resource/.style={draw,thick,rounded corners,rectangle,inner sep=1mm,minimum size=2mm},
    ]
      \node[resource] (r1) at (0,3) {$\beta_1 = w_2$};
      \node[resource] (r2) at (4,3) {$\alpha_2 = w_1^2$};
      \node[resource] (r3) at (-3.8,1.5) {$\beta_3 = w_1$};
      \node[resource] (r4) at (0,0) {$\beta_4 = w_2$};
      \node[resource] (r5) at (0,-2) {$\alpha_5 = w_1^2 - w_2^2$};

      \node[ultra thick, draw=orange, ellipse, minimum width=40mm, minimum height=80mm] at (0,0.5) {};
      \node[ultra thick, draw=red, ellipse, minimum width=70mm, minimum height=30mm, rotate=-15] at (-1.8,0.8) {};
      \node[ultra thick, draw=green!50!black, ellipse, minimum width=70mm, minimum height=30mm, rotate=15] at (-2,2.2) {};
      \node[ultra thick, draw=blue, ellipse, minimum width=64mm, minimum height=28mm] at (2.0,3.0) {};
    \end{tikzpicture}
  \end{center}
  \caption{%
    Symmetric simultaneous uniformly weighted congestion game with weights $w_1, w_2$ for which the price of anarchy is worst possible in case $w_1 \geq \tau w_2$ holds.
    Depicted are the four possible actions indicating which of the five resources $1,2,\dotsc,5$ are chosen.
    Coefficients $\alpha_r$ and $\beta_r$ that are not shown have value $0$.
  }
  \label{fig_sym_uni_instance3}
\end{figure}

\begin{proof}[Proof of lower bound in \cref{thm_sym_uni_specific_weights}]
  By symmetry we only need to prove the bounds for the case $w_1 \geq w_2$.
  We distinguish the remaining intervals for $w_1/w_2$:
  
\medskip
\noindent
  \textbf{Case $w_1 \leq 2w_2$:}
  Consider the congestion game described in \cref{fig_sym_uni_instance1}.
  The case in which the players choose the actions $\{1,2,3\}$ and $\{4,5\}$, respectively, has cost
  $ 2w_1^3 + 5w_1^2w_2 + 6w_1w_2^2 + 2w_2^2 $.
  Hence, this value is an upper bound on the cost of the social optimum.

  We claim that the actions $\{1,3,4\}$ and $\{2,3,5\}$ (for players $1$ and $2$, respectively) constitute a Nash equilibrium.
  Player~1 has cost
  $2w_1^3 + 5w_1^2w_2 + 4w_1w_2^2 + w_2^3 $,
  changing to $\{1,2,3\}$ or to $\{4,5\}$ incurs the same cost, and changing to $\{2,3,5\}$ incurs the cost
  $3w_1^3 + 7w_1^2w_2 + 6w_1w_2^2 + w_2^3$.
  Note that the latter cost is not smaller.
  Player~2 has cost
  $w_1^3 + 4w_1^2w_2 + 5w_1w_2^2 + 2w_2^3$, changing to $\{4,5\}$ incurs the same cost, changing to $\{1,2,3\}$ incurs the cost $w_1^3 + 5w_1^2w_2 + 4w_1w_2^2 + 2w_2^3$, and changing to $\{1,3,4\}$ incurs the cost $2w_1^3 + 7w_1^2w_2 + 6w_1w_2^2 + 3w_2^3$.
  Note that the latter costs are not smaller since $w_1 \geq w_2$ holds.
  The total cost of the equilibrium is $3w_1^3 + 9w_1^2w_2 + 8 w_1w_2^2 + 3w_2^2$.
  This yields for the price of anarchy a lower bound of $\frac{3w_1^3 + 9w_1^2w_2 + 9 w_1w_2^2 + 3w_2^2}{ 2w_1^3 + 5w_1^2w_2 + 6w_1w_2^2 + 2w_2^2 }$.

\medskip
\noindent
  \textbf{Case $2w_2 \leq w_1 \leq \tau$:}
  Consider the congestion game described in \cref{fig_sym_uni_instance2}.
  The case in which the players choose the actions $\{2,3,5\}$ and $\{1,4\}$, respectively, has cost $2w_1^3 + 4w_1^2w_2 + 4w_1w_2^2 + 2w_2^3$.
  Hence, this value is an upper bound on the cost of the social optimum.


  We claim that the actions $\{1,3\}$ and $\{2,3,4\}$ (for players $1$ and $2$, respectively) constitute a Nash equilibrium.
  Player~1 has cost $2w_1^3+4w_1^2w_2+2w_1w_2^2+w_2^3$, changing to $\{2,3,5\}$ or to $\{1,4\}$ incurs the same cost, and changing to $\{2,3,4\}$ incurs the cost $3w_1^3 + 5w_1^2w_2 + 3w_1w_2^2 + 2w_2^3$ which is clearly not smaller.
  Player~2 has cost $w_1^3+4w_1^2w_2+3w_1w_2^2+w_2^3$, changing to $\{2,3,5\}$ incurs the same cost, changing to $\{1,3\}$ incurs the cost $3w_1^3+4w_1^2w_2+4w_1w_2^2+w_2^3$, and changing to $\{1,4\}$ incurs the cost $2w_1^3+3w_1^2w_2+3w_1w_2^2+1w_2^3$.
  The last two cost terms are easily seen to not be smaller since $w_1 \geq w_2$ holds.
  The total cost of the equilibrium is $3w_1^3+8w_1^2w_2+5w_1w_2^2+w_2^3$.
  This yields for the price of anarchy a lower bound of $\frac{3w_1^3 + 8w_1^2w_2 + 5w_1w_2^2 + 2w_2^3}{2w_1^3 + 4w_1^2w_2 + 4w_1w_2^2 + 2w_2^3}$.
  
\medskip
\noindent
  \textbf{Case $w_1 \geq \tau w_2:$}
  Consider the congestion game described in \cref{fig_sym_uni_instance3}.
  The case in which the players choose the actions $\{1,2\}$ and $\{3,4\}$, respectively, has cost $w_1w_2 + w_1^2 + w_1 w_2 + w_2^2$.
  Hence, this value is an upper bound on the cost of the social optimum.

  We claim that the actions $\{1,3\}$ and $\{1,4,5\}$ (for players $1$ and $2$, respectively) constitute a Nash equilibrium.
  Player~1 has cost $w_1^2 + w_1 w_2 + w_2^2$, changing to $\{1,2\}$ incurs the same cost, changing to $\{3,4\}$ incurs the cost $2w_1^2 + w_1 w_2$, and changing to $\{1,4,5\}$ incurs the cost $w_1^2 - w_2^2 + (w_1 + w_2)^2 = 2w_1^2 + 2w_1w_2$.
  Note that the latter costs are not smaller since $w_1 \geq w_2$ holds.
  Player~2 has cost $w_1^2 + w_1w_2 + w_2^2$, changing to $\{1,2\}$ or to $\{3,4\}$ incurs the same cost, and changing to $\{1,3\}$ incurs the cost $w_1^2 + 2w_1w_2 + w_2^2$.
  The total cost of the equilibrium is $2w_1^2 w_2 + 2w_1 w_2 + 2w_2^2$.
  This yields for the price of anarchy a lower bound of $\frac{2w_1^2 + 2w_1w_2 + 2w_2^2}{w_1^2 + 2w_1w_2 + w_2^2}$.
  \qed
\end{proof}

Our computed worst case instances for which the price of anarchy approaches its supremum are no network routing games.
However, there is a family of such games for which the price of anarchy also approaches $2$, although slower than that for congestion games.

\begin{theorem}
  \label{thm_sym_uni_network}
  The price of anarchy for symmetric simultaneous uniformly weighted two-player network routing games is equal to $2$.
\end{theorem}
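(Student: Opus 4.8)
For the upper bound there is nothing new to do. Every symmetric two-player network routing game is in particular a symmetric two-player congestion game: take the arc set as the resource set, and note that a common source--sink pair forces $\actionsOne=\actionsTwo$. Hence the class in the statement is contained in the class covered by \cref{thm_sym_uni_all_weights}, which already bounds its price of anarchy by $2$. The entire task is therefore to produce a family of symmetric network routing games whose price of anarchy tends to $2$.

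For the lower bound my plan is to stay inside the parallel-link model. Fix weights $w_1\ge w_2>0$, take a single source $s$, a single sink $t$, and two parallel arcs from $s$ to $t$: a ``flat'' arc with constant cost function $\alpha+\beta x$ where $\alpha=w_1+w_2$, $\beta=0$, and a ``loaded'' arc whose cost equals its load, i.e.\ $\alpha=0$, $\beta=1$. Both players route from $s$ to $t$, so the game is symmetric, and each has exactly the two actions ``flat'' and ``loaded''. I would then carry out three short steps. First, check that the profile in which \emph{both} players use the loaded arc is a Nash equilibrium: its load is $w_1+w_2$, so each player pays $w_1+w_2$, while a deviator to the flat arc also pays the constant $w_1+w_2$, so \cref{eq_lp_nash} holds (with equality in both inequalities). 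Second, bound the social optimum: routing player~$1$ on the flat arc and player~$2$ on the loaded arc has total cost $(w_1+w_2)+w_2$, hence $C(A^{\text{opt}})\le w_1+2w_2$. Third, combine: the price of anarchy of this instance is at least $\tfrac{2(w_1+w_2)}{w_1+2w_2}$, and letting $w_1/w_2\to\infty$ (say $w_2=1$, $w_1\to\infty$) this tends to $2$. Together with the upper bound this gives the exact value $2$.

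I do not expect a genuine obstacle. The only thing that requires a moment's thought is realising that one does not need a Braess-type gadget to reach the worst-case constant: once the two weights are allowed to diverge, even two parallel arcs suffice, provided the flat arc's constant is set exactly to $w_1+w_2$ --- small enough that the bad profile survives as an equilibrium, yet cheap enough to appear in a near-optimal profile. (If one instead wanted the \emph{exact} price-of-anarchy function restricted to network routing games rather than just the supremum $2$, one would have to rerun the linear-programming machinery of \cref{sec_lp} with path constraints; that would be the laborious part, but it is not needed here.) As a side remark, the instance above has price of anarchy $\tfrac{2(w_1+w_2)}{w_1+2w_2}=2-\tfrac{2w_2}{w_1+2w_2}$, which for $w_1\ge\tau w_2$ lies strictly below the congestion-game value $\tfrac{2(w_1^2+w_1w_2+w_2^2)}{(w_1+w_2)^2}$ from case~(c) of \cref{thm_sym_uni_specific_weights} --- the precise sense in which, among network routing games, the bound $2$ is approached more slowly.
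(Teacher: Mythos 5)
Your proof is correct and follows essentially the same route as the paper: the upper bound is inherited from Corollary~\ref{thm_sym_uni_all_weights}, and the lower bound uses the same two-route instance (a constant-cost route of value $w_1+w_2$ versus a load-proportional route), with both players on the loaded route forming an equilibrium of cost $2(w_1+w_2)$ against a social optimum of at most $w_1+2w_2$, so the ratio tends to $2$ as $w_1/w_2\to\infty$. The only cosmetic difference is that the paper realises the flat alternative as a two-arc path $s$-$a$-$t$ with constants $w_1$ and $w_2$ rather than a single parallel arc of constant $w_1+w_2$, which changes nothing.
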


\begin{figure}[htb]
  \begin{center}
    \begin{tikzpicture}[
      node/.style={draw,thick,circle,inner sep=0mm,minimum size=4mm},
      arc/.style={draw,->,thick}
    ]
      \begin{scope}
        \node[node] (s) at (0,0) {$s$};
        \node[node] (a) at (2,1) {$a$};
        \node[node] (t) at (4,0) {$t$};
        \draw[arc] (s) to node[above left] {$0 ; w_1$} (a);
        \draw[arc] (a) to node[above right] {$0 ; w_2$} (t);
        \draw[arc] (s) to node[below] {$1 ; 0$} (t);
      \end{scope}

      \node[anchor=west] at (6,0.5) {
        \begin{tabular}{l|l|l}
          Player    & \colorOne{1}            & \colorTwo{2} \\ \hline
          Source    & \colorOne{$s$}          & \colorTwo{$s$} \\ \hline
          Sink      & \colorOne{$t$}          & \colorTwo{$t$} \\ \hline
          Possible  & \colorOne{$s$-$a$-$t$}, & \colorTwo{$s$-$a$-$t$},  \\
          actions   & \colorOne{$s$-$t$}      & \colorTwo{$s$-$t$}
        \end{tabular}
      };
    \end{tikzpicture}
  \end{center}
  \caption{%
    Symmetric simultaneous uniformly weighted network routing game with weights $w_1, w_2$ for which the price of anarchy approaches $2$ when $w_1 / w_2 \to \infty$.
    Depicted are the network with cost coefficients (left) as well as the source, sink and possible actions of both players (right).
    For each arc $r$ we denote by $\beta_r ; \alpha_r$ the linear and absolute terms of the uniform cost function.
  }
  \label{fig_sym_uni_network_instance}
\end{figure}

\begin{proof}
  The upper bound of $2$ follows from \cref{thm_sym_uni_all_weights} since every network routing game is also a congestion game.
  For the lower bound we consider the network routing game described in \cref{fig_sym_uni_network_instance}.
  The case in which the players choose the paths $s$-$a$-$t$ and $s$-$t$, respectively, has cost $w_1 + 2w_2$.
  Hence, this value is an upper bound on the cost of the social optimum.

  The situation in which both players use the arc $s$-$t$ constitutes a Nash equilibrium since each player has cost $w_1 + w_2$ and changing to the path $s$-$a$-$t$ would incur the same cost.
  The total cost of the equilibrium is $2w_1 + 2w_2$.
  This yields for the price of anarchy a lower bound of $\frac{2w_1 + 2w_2}{ w_1 + 2w_2 }$.
  This fraction approaches $2$ for $w_1/w_2 \to \infty$, which concludes the proof.
\end{proof}

\subsection{Proportional costs}
\label{sec_sym_prop}

Also for proportional cost functions we have an irrational weight ratio at which the behavior changes.
It is the unique real root $\sigma$ of the polynomial $x^4 - 3x^2 - 3x - 1$ and yields $\sigma \approx 2.14790$.

\begin{theorem}
  \label{thm_sym_prop_specific_weights}
  Let $w_1, w_2 \geq 0$ with $w_1 + w_2 > 0$.
  The price of anarchy for symmetric simultaneous proportionally weighted two-player congestion/network routing games with weights $w_1, w_2$ is equal to
  \begin{enumerate}[label=(\alph*)]
  \item
    \label{sym_prop_weights1}
    $\frac{ 2w_1^4 + 6w_1^3w_2 + 8w_1^2w_2^2 + 6w_1w_2^3 + 2w_2^4 }{ 2w_1^4 + 3w_1^3w_2 + 4w_1^2w_2^2 + 3w_1w_2^3 + 2w_2^4 + w_1w_2 \min(w_1^2,w_2^2) }$ if $\frac{w_2}{\sigma} \leq w_1 \leq \sigma w_2$.
  \item
    \label{sym_prop_weights2}
    $\frac{ 2w_1^3w_2 + 4w_1^2w_2^2 + 2w_1w_2^3 + 2\max(w_1^4+w_1^3w_2,w_1w_2^3+w_2^4) }{ w_1^3w_2 + 2w_1^2w_2^2 + 2w_1w_2^3 + w_2^4   + 2w_1w_2 \min(w_1^2,w_2^2) + \max(w_1^4,w_2^4) }$ if $w_1 \geq \sigma w_2$ or $w_1 \leq \frac{w_2}{\sigma}$.
  \end{enumerate}
\end{theorem}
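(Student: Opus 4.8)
The plan is to prove the two matching bounds separately, in the same style already used for \cref{thm_sim_uni_specific_weights} and \cref{thm_sim_prop_specific_weights}. Since the price of anarchy is invariant under exchanging $w_1$ and $w_2$ and under scaling both weights by a common positive factor, it suffices to treat the normalized case $w_2 = 1$, $w_1 = x \ge 1$. Under this normalization the two cases of the statement become $1 \le x \le \sigma$ for case~(a) and $x \ge \sigma$ for case~(b) (with the inner $\max$ resolved to its first branch and the various $\min$/$\max$ of powers of $w_1,w_2$ simplified accordingly); the threshold $\sigma$, the real root of $x^4 - 3x^2 - 3x - 1$, is precisely the abscissa at which the two candidate rational functions of $x$ take the same value, which is the reason the exact bound is piecewise.

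For the lower bound I would, for each of the two $x$-ranges, exhibit a concrete symmetric two-player instance that can be realized as a network routing game (so that the claimed bound already holds for network routing games), built from a small number of resources and a few common actions $\actionsOne = \actionsTwo$, in the spirit of \cref{fig_sym_uni_instance1,fig_sym_uni_instance2,fig_sym_uni_instance3} and \cref{fig_sim_uniprop_instance}. In each instance I would name a specific profile $(\labelOneOpt, \labelTwoOpt)$ whose total proportional cost upper-bounds the social optimum, and a specific profile $(\labelOneEqui, \labelTwoEqui)$; I would then verify that the latter is a Nash equilibrium by checking, for one of the two (symmetric) players and each of its alternative actions, that the proportional cost does not strictly decrease, and invoke symmetry of the action sets for the other player. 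Dividing the equilibrium's total cost by the social-optimum upper bound gives exactly the stated fraction in each range. The deviation that is binding at the equilibrium — hence which cost coefficients one must put on the resources — differs between the two ranges, which is the combinatorial source of the case distinction.

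For the upper bound I would use the linear-programming mechanism of \cref{sec_lp}: for each fixed $x$, the worst-case ratio over all instances equals the optimum value of the compact LP whose variables are the resource cost coefficients, constrained by the Nash inequalities and a normalization of the social cost. Rather than solving this LP, I would produce an explicit feasible dual solution whose objective equals the lower-bound expression $\rho(x)$ obtained above. Concretely, starting from the aggregated Nash inequality $C(\labelOneEqui, \labelTwoEqui) \le C_1(\labelOneOpt, \labelTwoEqui) + C_2(\labelOneEqui, \labelTwoOpt)$, one reduces resource by resource to a fixed finite list of local affine inequalities indexed by the $O(1)$ possible membership patterns of a resource in the equilibrium and optimal actions of the two players; the dual solution is then a choice of nonnegative multipliers for which a suitable nonnegative combination of these local inequalities implies $C(\labelOneEqui, \labelTwoEqui) \le \rho(x)\, C(\labelOneOpt, \labelTwoOpt)$. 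Following the paper's methodology, one computes these multipliers for several rational values of $x$ in each subinterval, conjectures their closed form as rational functions of $x$, and then verifies algebraically that (i) the multipliers are nonnegative on the relevant range and (ii) the resulting inequality becomes, after clearing denominators, the assertion that an explicit polynomial in $x$ is nonnegative on that range.

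The main obstacle, as in the earlier cases, is precisely this final algebraic step together with correctly guessing the parametric dual certificate. The numerators and denominators are quartics in $w_1, w_2$ that do not factor pleasantly, and because the threshold $\sigma$ is an algebraic number of degree four, the verification that the two pieces agree at $\sigma$ and that each piece is the true maximum on its subinterval cannot avoid working explicitly with the minimal polynomial $x^4 - 3x^2 - 3x - 1$ (for instance by reducing higher powers of $x$ modulo it, or by a Sturm-sequence argument locating the unique sign change). By contrast, the symmetry and scaling reductions, the description of the extremal instances, and the verification of the Nash inequalities are routine once the instances are written down.
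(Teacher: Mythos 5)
Your overall architecture is the same as the paper's: the lower bound is established by explicit symmetric network routing instances, one per weight-ratio regime, whose Nash-equilibrium and social-optimum profiles realize the stated fractions (the paper's instances are those of \cref{fig_sym_prop_instance1,fig_sym_prop_instance2}), and the matching upper bound comes from the LP of \cref{sec_lp} via a conjectured parametric dual certificate that is then verified algebraically. Your normalization to $w_2=1$, $w_1=x\ge 1$ and your reading of $\sigma$ as the crossing point of the two candidate rational functions are likewise consistent with what the paper does.

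The one step in your plan that would fail as written is the anchor of the upper bound. You propose to start from the single aggregated Nash inequality $\costOne(\labelOneEqui,\labelTwoEqui)+\costTwo(\labelOneEqui,\labelTwoEqui) \le \costOne(\labelOneOpt,\labelTwoEqui)+\costTwo(\labelOneEqui,\labelTwoOpt)$ and then perform a resource-by-resource (smoothness-style) analysis of that one inequality. That route cannot yield the exact two-player constants: it is precisely the derivation that gives the generic $1+\phi\approx 2.618$ bound for weighted proportional games, and it makes no use of symmetry at all. The paper's LP instead retains \emph{all} Nash deviation constraints with the symmetric action set $\actionsOne=\actionsTwo=\{\labelOneOpt,\labelOneEqui,\labelTwoOpt,\labelTwoEqui\}$ --- in particular deviations of player~1 to $\labelTwoOpt$ or $\labelTwoEqui$ and of player~2 to $\labelOneOpt$ or $\labelOneEqui$, which exist only because the game is symmetric and are exactly what separates \cref{thm_sym_prop_specific_weights} from the asymmetric \cref{thm_sim_prop_specific_weights} --- together with the constraints~\eqref{eq_lp_common_opt} that every profile costs at least the social optimum. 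The dual certificate must be free to place multipliers on any of these, and the resource-level sign analysis is carried out on that richer combination; this is where the threshold $\sigma$ and the case split actually emerge. If you replace your single starting inequality by the full constraint set of the symmetric LP, the remainder of your plan goes through and coincides with the paper's method.
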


We again obtain the result for arbitrary weights.

\begin{corollary}
  \label{thm_sym_prop_all_weights}
  The price of anarchy for symmetric simultaneous proportionally weighted two-player congestion/network routing games is approximately $1.6096$, which is attained for $w_1 / w_2 \approx 1.1940$ and for $w_2 / w_1 \approx 1.1940$.
\end{corollary}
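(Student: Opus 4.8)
The plan is to follow the template of the proofs of \cref{thm_sim_uni_all_weights} and \cref{thm_sim_prop_all_weights}: use \cref{thm_sym_prop_specific_weights} to reduce the problem to a one-variable optimization, then locate the maximizer. The price of anarchy is invariant under exchanging the two players and under scaling both weights by a common positive factor, so I may assume $w_1 \geq w_2$ and normalize $w_2 = 1$, writing $x \coloneqq w_1 \in [1,\infty)$. Then $\min(w_1,w_2)=w_2$, $\min(w_1^2,w_2^2)=w_2^2$, and every $\max$-term in \cref{thm_sym_prop_specific_weights} resolves in favour of $w_1$, so the price of anarchy becomes a continuous, piecewise-rational function $f$ of $x$: on $[1,\sigma]$ it equals a fixed ratio $f_a(x)$ of two quartics (the first case of \cref{thm_sym_prop_specific_weights}) and on $[\sigma,\infty)$ it equals a fixed ratio $f_b(x)$ of two quartics (the second case), the two agreeing at the breakpoint because $\sigma$ is a root of $x^4-3x^2-3x-1$. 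What must be computed is $\sup_{x\ge 1}f(x)$, and by the player-exchange symmetry it is attained both at the maximizer $x^\star$ and at $1/x^\star$, which explains the two weight ratios in the statement.

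I would first dispose of the tail. On $[\sigma,\infty)$ one finds, after resolving the maxima, $f_b(x)=\tfrac{2x(x+1)(x^2+x+1)}{(2x+1)(x^3+x+1)}$, and I claim this is decreasing, so that $\sup_{x\ge\sigma}f = f_b(\sigma)=f_a(\sigma)$; this reduces to showing that the numerator of $f_b'$ — an explicit polynomial of degree at most $6$ with negative leading term — has no root in $(\sigma,\infty)$, which follows by locating its real roots or by a direct estimate. Since $f_a(1)=\tfrac{24}{15}=\tfrac{8}{5}$ while $f_a(\sigma)<\tfrac{8}{5}$, the global maximum of $f$ lies in the interior of $[1,\sigma]$, so it remains to maximize $f_a$ there. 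Writing $f_a=N/D$ with $N=2(x+1)^2(x^2+x+1)$ and $D=2x^4+3x^3+4x^2+4x+2>0$, I would form $f_a'=(N'D-ND')/D^2$, show that its degree-$\le 6$ numerator has exactly one root $x^\star$ in $(1,\sigma)$, positive to its left and negative to its right, and conclude that $x^\star$ is the unique maximizer. As in \cref{thm_sim_prop_all_weights}, $x^\star$ is a root of a polynomial with no convenient closed form, so one reports the numerical values $x^\star\approx 1.1940$ and $f_a(x^\star)\approx 1.6096$, which is the asserted price of anarchy.

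The main obstacle is this root analysis of the two degree-$6$ polynomials, the numerators of $f_a'$ and $f_b'$: lacking closed forms, one must localise the relevant roots and establish the sign pattern rigorously rather than merely numerically — e.g.\ via Sturm sequences, interval arithmetic, or by exploiting available structure (the factorisations of $N$ and of $f_b$ above, and the $x\leftrightarrow 1/x$ symmetry of the extension of $f$ to $(0,\infty)$, which constrains its critical points). Everything else — resolving the $\min$/$\max$ terms, verifying that the two pieces match at $\sigma$, and evaluating $f_a$ at $x=1$ and $x=\sigma$ — is routine but algebra-heavy.
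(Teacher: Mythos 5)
Your proposal follows the paper's proof: reduce by player symmetry and by scaling to maximizing the single rational function $f_a(x)=\frac{2x^4+6x^3+8x^2+6x+2}{2x^4+3x^3+4x^2+4x+2}$ over $x\ge 1$, observe that the numerator of $f_a'$ has degree~$6$ with no usable closed form, and report the numerical maximizer $x^\star\approx 1.1940$ with value $\approx 1.6096$. You are in fact more careful than the paper, which silently restricts attention to the first branch and never checks the regime $w_1\ge\sigma w_2$; your explicit verification that the second branch stays below $f_a(1)=8/5$ on $[\sigma,\infty)$ is a welcome addition (note only that your expression for $f_b$ agrees with the lower-bound instance and with \cref{fig_plot} rather than with the denominator printed in case~(b) of \cref{thm_sym_prop_specific_weights}, which appears to contain a typo — taken literally, that printed denominator would make the second branch tend to $2$ and contradict the corollary).
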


\begin{proof}[Proof of \cref{thm_sym_prop_all_weights}]
  By symmetry with respect to weights we can assume $w_1 \geq w_2$.
  Moreover, scaling $w_1$ and $w_2$ by the same positive scalar does not change the price of anarchy and hence we can assume $w_2 = 1$.
  Thus, the maximum is equal to that of the function $f(x) = \frac{ 2x^4+6x^3+8x^2+6x+2 }{ 2x^4+3x^3+4x^2+4x+2 }$ whose derivative has a numerator of degree $6$.
  Hence we provide only the numerical solution $x^\star \approx 1.1940$ which yields an approximate price of anarchy of $1.6096$.
  \qed
\end{proof}

\begin{figure}[H]
  \begin{center}
    \begin{tikzpicture}[
      node/.style={draw,thick,circle,inner sep=0mm,minimum size=4mm},
      arc/.style={draw,->,thick}
    ]
      \begin{scope}
        \node[node] (s) at (-0.1,-0.3) {$s$};
        \node[node] (a) at (2.7,-2.1) {$a$};
        \node[node] (b) at (2.1,0) {$b$};
        \node[node] (c) at (5.1,0) {$c$};
        \node[node] (d) at (4.5,-2.1) {$d$};
        \node[node] (t) at (7.9,-0.8) {$t$};
        \draw[arc] (s) to node[below, rotate=-33] {$w_1^2 + 2w_1w_2 + w_2^2$} (a);
        \draw[arc] (s) to node[above, rotate=7] {$w_1w_2 + w_2^2$} (b);
        \draw[arc] (a) to node[right] {$0$} (b);
        \draw[arc] (a) to node[above] {$0$} (d);
        \draw[arc] (c) to node[left] {$0$} (d);
        \draw[arc] (b) to node[above] {$w_1^2+w_1w_2+w_2^2$} (c);
        \draw[arc] (d) to node[below, rotate=23] {$w_1^2 + 2w_1w_2 + w_2^2$} (t);
        \draw[arc] (c) to node[above, rotate=-15] {$w_1^2 + w_1w_2$} (t);
      \end{scope}

      \node[anchor=north east] at (-0.6,0.7) {
        \begin{tabular}{l|l}
          Player    & \colorOne{1} and \colorTwo{2} \\ \hline
          Source    & $s$ \\ \hline
          Sink      & $t$ \\ \hline
          Possible  & $s$-$b$-$c$-$t$,  \\
          actions   & $s$-$b$-$c$-$d$-$t$, \\
                    & $s$-$a$-$b$-$c$-$t$, \\
                    & $s$-$a$-$d$-$t$, \\
                    & $s$-$a$-$b$-$c$-$d$-$t$
        \end{tabular}
      };
    \end{tikzpicture}
  \end{center}
  \caption{%
    Symmetric simultaneous network routing game with weights $w_1, w_2$ for which the price of anarchy for proportional weights is worst possible in case $w_2 \leq w_1 \leq \sigma w_2$.
    Depicted are the network with cost coefficients (left) as well as the sources, sinks and possible actions of each player (right).
  }
  \label{fig_sym_prop_instance1}
\end{figure}
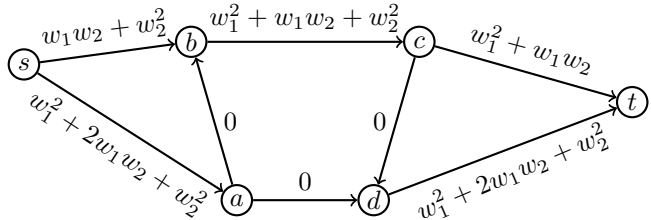

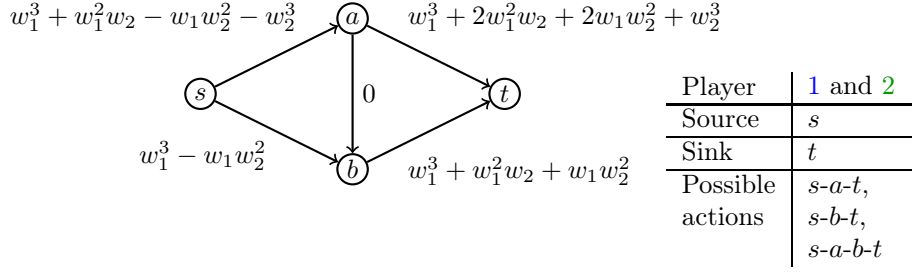
\begin{figure}[H]
  \begin{center}
    \begin{tikzpicture}[
      node/.style={draw,thick,circle,inner sep=0mm,minimum size=4mm},
      arc/.style={draw,->,thick}
    ]
      \begin{scope}
        \node[node] (s) at (0,0) {$s$};
        \node[node] (a) at (2,+1) {$a$};
        \node[node] (b) at (2,-1) {$b$};
        \node[node] (t) at (4,0) {$t$};
        \draw[arc] (s) to node[above left,near end] {$w_1^3 + w_1^2w_2 - w_1w_2^2 - w_2^3$} (a); 
        \draw[arc] (s) to node[below left] {$w_1^3 - w_1w_2^2$} (b);
        \draw[arc] (a) to node[right] {$0$} (b);
        \draw[arc] (a) to node[above right, near start] {$w_1^3+2w_1^2w_2+2w_1w_2^2+w_2^3$} (t);
        \draw[arc] (b) to node[below right, near start] {$w_1^3+w_1^2w_2+w_1w_2^2$} (t);
      \end{scope}

      \node[anchor=west] at (6,-1) {
        \begin{tabular}{l|l}
          Player    & \colorOne{1} and \colorTwo{2} \\ \hline
          Source    & $s$ \\ \hline
          Sink      & $t$ \\ \hline
          Possible  & $s$-$a$-$t$, \\
          actions   & $s$-$b$-$t$, \\
                    & $s$-$a$-$b$-$t$
        \end{tabular}
      };
    \end{tikzpicture}
  \end{center}
  \caption{%
    Symmetric simultaneous network routing game with weights $w_1, w_2$ for which the price of anarchy for proportional weights is worst possible in case $w_1 \geq \sigma w_2$.
    Depicted are the network with cost coefficients (left) as well as the sources, sinks and possible actions of each player (right).
  }
  \label{fig_sym_prop_instance2}
\end{figure}

\begin{proof}[Proof of lower bound in \cref{thm_sym_prop_specific_weights}]
  By symmetry we only need to prove the bounds for the case $w_1 \geq w_2$.
  We distinguish the remaining intervals for $w_1/w_2$:

\medskip
\noindent
  \textbf{Case $w_2 \leq w_1 \leq \sigma w_2$:}
  Consider the network routing game described in \cref{fig_sym_prop_instance1}.
  The case in which the players choose the path $s$-$b$-$c$-$t$ and $s$-$a$-$d$-$t$, respectively, has cost
  $2w_1^4 + 3w_1^3w_2 + 4w_1^2w_2^2 + 4w_1w_2^3 + 2w_2^4$.
  Hence, this value is an upper bound on the cost of the social optimum.

  We claim that the paths $s$-$b$-$c$-$d$-$t$ and $s$-$a$-$b$-$c$-$t$ (for players $1$ and $2$, respectively) constitute a Nash equilibrium.
  Player~1 has cost
  $2w_1^4 + 5w_1^3w_2 + 4w_1^2w_2^2 + w_1w_2^3$,
  changing to $s$-$b$-$c$-$t$ or to $s$-$a$-$d$-$t$
  incurs the same cost, and changing to $s$-$a$-$b$-$c$-$t$ or to $s$-$a$-$b$-$c$-$d$-$t$
  incurs the cost $3w_1^4 + 7w_1^3w_2 + 6w_1^2w_2^2 + 2w_1w_2^3$, which is not smaller due to $w_1,w_2 \geq 0$.
  Player~2 has cost
  $w_1^3w_2 + 4w_1^2w_2^2 + 5w_1w_2^3 + 2w_2^4$,
  changing to $s$-$b$-$c$-$t$ or to $s$-$a$-$d$-$t$
  incurs the same cost, and changing to $s$-$b$-$c$-$d$-$t$ or to $s$-$a$-$b$-$c$-$d$-$t$
  incurs the cost $2w_1^3w_2 + 6w_1^2w_2^2 + 7w_1w_2^3 + 3w_2^4$, which is not smaller due to $w_1,w_2 \geq 0$.
  The total cost of the equilibrium is 
  $2w_1^4 + 6w_1^3w_2 + 8w_1^2w_2^2 + 6w_1w_2^3 + 2w_2^4$.
  This yields for the price of anarchy a lower bound of $\frac{ 2w_1^4 + 6w_1^3w_2 + 8w_1^2w_2^2 + 6w_1w_2^3 + 2w_2^4 }{ 2w_1^4 + 3w_1^3w_2 + 4w_1^2w_2^2 + 4w_1w_2^3 + 2w_2^4 }$.
  
\medskip
\noindent
  \textbf{Case $w_1 \geq \sigma w_2$:}
  Consider the network routing game described in \cref{fig_sym_prop_instance2}.
  The case in which the players choose the path $s$-$b$-$t$ and $s$-$a$-$t$, respectively, has cost
  $2w_1^5 + w_1^4w_2 + 2w_1^3w_2^2 + 3w_1^2w_2^3 + w_1w_2^4$.
  Hence, this value is an upper bound on the cost of the social optimum.


  We claim that the paths $s$-$a$-$b$-$t$ and $s$-$b$-$t$ (for players $1$ and $2$, respectively) constitute a Nash equilibrium.
  Player~1 has cost
  $2w_1^5 + 3w_1^4w_2 + w_1^3w_2^2$,
  changing to $s$-$a$-$t$ or to $s$-$b$-$t$ both incur the
  same cost.
  Player~2 has cost
  $w_1^4w_2 + 3w_1^3w_2^2 + 2w_1^2w_2^3$, changing to $s$-$a$-$t$ incurs the
  same cost, and changing to $s$-$a$-$b$-$t$ incurs the cost
  $2w_1^4w_2 + 4w_1^3w_2^2 + 2w_1^2w_2^3-w_1w_2^4-w_2^5$.
  Note that the latter costs are not smaller since $w_1 \geq w_2$ holds.
  The total cost of the equilibrium is 
  $2w_1^5 + 4w_1^4w_2 + 4 w_1^3w_2^2 + 2w_1^2w_2^3$.
  This yields for the price of anarchy a lower bound of $\frac{ 2w_1^5 + 4w_1^4w_2 + 4 w_1^3w_2^2 + 2w_1^2w_2^3 }{ 2w_1^5 + w_1^4w_2 + 2w_1^3w_2^2 + 3w_1^2w_2^3 + w_1w_2^4 }$.
  \qed
\end{proof}

\section{Results for sequential games}
\label{sec_seq}

\subsection{Uniform costs}
\label{sec_seq_uni}

\begin{theorem}
  \label{thm_seq_uni_specific_weights}
  Let $w_1, w_2 \geq 0$ with $w_1 + w_2 > 0$.
  The price of anarchy for sequential uniformly weighted two-player congestion/network routing games with weights $w_1, w_2$ is equal to
  \begin{enumerate}[label=(\alph*)]
  \item
    \label{thm_seq_uni_weights1}
    $1 + \frac{w_1}{w_1 + w_2}$ if $w_2 \leq w_1$,
  \item
    \label{thm_seq_uni_weights2}
    $1 + \frac{2w_1w_2}{2w_1^2 + w_1w_2 + w_2^2}$ if $w_1 \leq w_2 \leq 2w_1$ and
  \item
    \label{thm_seq_uni_weights3}
    $1 + \frac{w_2}{2w_1+w_2}$ if $w_2 \geq 2w_1$.
  \end{enumerate}
  It is attained by the weighted network routing games described in \cref{fig_seq_uni_instance}.
\end{theorem}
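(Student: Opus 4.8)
The plan is to establish matching lower and upper bounds separately in each of the three weight regimes $w_1/w_2 \ge 1$, $\tfrac12 \le w_1/w_2 \le 1$, and $w_1/w_2 \le \tfrac12$. For the lower bound I would take the network routing games of \cref{fig_seq_uni_instance}: in each regime, evaluate the cheap action profile marked in the figure to obtain an upper bound on the social optimum, verify by \emph{backward induction} that the indicated pair of $(s,t)$-paths is a \emph{subgame-perfect} outcome, and then divide the two costs and simplify to the stated rational function of $w_1,w_2$. The first and last steps are routine and mirror the lower-bound proofs of \cref{thm_sim_uni_specific_weights,thm_sym_uni_specific_weights}; the middle step needs two checks that are specific to extensive-form play. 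First, that player~2's path minimizes $C_2(E_1,\cdot)$ over its finitely many $(s,t)$-paths, which is formally the same as a Nash best-response check. Second, and this is the genuinely sequential point, that for \emph{every} alternative first path $A_1$ of player~1 the cost $C_1(A_1,\cdot)$ evaluated at player~2's best response to $A_1$ is at least player~1's equilibrium cost; the instances are built so that a deviation of player~1 lures player~2 onto the arcs player~1 has just taken, and verifying this amounts to a finite case distinction over the path set of each instance.

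For the upper bound I would run the linear program of \cref{sec_lp} in its form for sequential games. After scaling we may assume $w_2 = 1$ and set $x = w_1$. A candidate worst-case instance is encoded by the equilibrium outcome $E = (E_1,E_2)$, a social optimum $O = (O_1,O_2)$, and --- the new ingredient for extensive-form games --- player~2's best response $B_2$ to player~1's deviation to $O_1$; every resource is grouped according to which of the five sets $E_1,E_2,O_1,O_2,B_2$ contain it, and the LP carries one pair of affine coefficients per group. Its constraints state that $E_2$ is a best response to $E_1$, that $B_2$ is a best response to $O_1$, and that player~1 does not profit from switching from $E_1$ to $O_1$ given that player~2 will then respond with $B_2$; the objective maximizes $C(E)$ subject to $C(O) = 1$. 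Solving this finite LP at a handful of rational values of $x$ in each interval produces, on the primal side, the cost coefficients appearing in \cref{fig_seq_uni_instance} and, on the dual side, explicit multipliers $\lambda(x)$ that are rational functions of $x$. The bound then follows from the standard primal--dual certificate: one writes $\lambda(x)$ out and checks, as a polynomial identity in the resource coefficients, that $\rho(x)\,C(O) - C(E)$ is exactly the non-negative $\lambda(x)$-combination of the slacks of the three constraints, where $\rho(x)$ is the claimed bound. This is carried out once per regime, and finally one checks that the three expressions agree at $x = 1$ and at $x = \tfrac12$.

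The main obstacle is getting the sequential LP right: unlike a Nash equilibrium, a subgame-perfect outcome is not characterized by one-step deviations from the outcome alone, so the auxiliary variable $B_2$, together with enough of player~2's deviations from it, must be included for the LP value to equal --- not merely upper-bound --- the true price of anarchy; this is precisely where the analysis departs from the unweighted simultaneous treatment of \cite{JongU19}. Beyond that, the remaining effort is the mechanical but lengthy algebra of verifying the dual certificates, in particular in the second and third regimes.
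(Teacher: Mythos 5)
Your proposal matches the paper's proof in both structure and substance: the lower bound is obtained exactly as you describe, by evaluating the cheap profile and verifying the subgame-perfect outcome by backward induction in each of the three instances of \cref{fig_seq_uni_instance}, and the upper bound comes from the sequential LP of \cref{sec_lp} with the auxiliary best-response label $\labelTwoEquiX$ (your $B_2$), whose correctness is \cref{thm_correctness} and whose dual multipliers yield the algebraic certificate (worked out in the paper for the regime $2w_1 \leq w_2$). No substantive differences to report.
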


By taking the supremum over all feasible weight pairs, we also obtain the following result for the case in which the weights are not fixed.

\begin{corollary}
  \label{thm_seq_uni_all_weights}
  The price of anarchy for sequential uniformly weighted two-player congestion games is equal to $2$.
\end{corollary}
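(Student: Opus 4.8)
The plan is to derive the corollary from the exact parametric formula in \cref{thm_seq_uni_specific_weights} by a one-variable optimization. Recall that the price of anarchy of a \emph{class} of games is the supremum of $\textup{PoA}(I)$ over all instances $I$ in the class; since \cref{thm_seq_uni_specific_weights} identifies $\textup{PoA}(I)$ exactly for every fixed weight pair, it suffices to take the supremum of the three-case expression over all admissible $(w_1,w_2)$ with $w_1+w_2>0$. In contrast to the simultaneous setting, one cannot reduce to $w_1\ge w_2$ by symmetry here: player~1 moves first, so the two weights play genuinely different roles and all three regimes of \cref{thm_seq_uni_specific_weights} must be considered.

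\textbf{Upper bound.} I would show that each of the three expressions is strictly less than $2$, so that $\textup{PoA}(I)<2$ for every instance. Each has the form $1+p/q$ with $0\le p<q$: in case~\ref{thm_seq_uni_weights1} because $w_1<w_1+w_2$; in case~\ref{thm_seq_uni_weights3} because $w_2<2w_1+w_2$; and in case~\ref{thm_seq_uni_weights2} because $2w_1w_2<2w_1^2+w_1w_2+w_2^2$, which follows from $2w_1^2-w_1w_2+w_2^2>0$ (a quadratic in $w_1$ with negative discriminant, hence positive unless $w_1=w_2=0$).

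\textbf{Lower bound.} Regime~\ref{thm_seq_uni_weights1} alone suffices: after scaling so that $w_2=1$ and writing $x=w_1\ge1$, the value is $1+\frac{x}{x+1}=2-\frac{1}{x+1}$, which converges to $2$ as $x\to\infty$. (Symmetrically, regime~\ref{thm_seq_uni_weights3} with $w_1=1$ and $w_2\to\infty$ gives $2-\frac{2}{w_2+2}\to2$, whereas regime~\ref{thm_seq_uni_weights2} stays bounded away from $2$, its maximum $1+\frac{4\sqrt2-2}{7}$ being attained at $w_2/w_1=\sqrt2$.) Combining the two bounds gives supremum exactly $2$.

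There is no real obstacle here beyond careful bookkeeping across the three weight regimes; the only point worth flagging is that $2$ is a supremum but not a maximum, so the argument is necessarily a limiting one — the price of anarchy is driven up to $2$ by the family of network routing games of \cref{thm_seq_uni_specific_weights} as the weight ratio $w_1/w_2$ (or its reciprocal) tends to infinity, but no single instance attains it.
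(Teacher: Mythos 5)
Your proposal is correct and follows essentially the same route as the paper: bound each of the three parametric expressions from \cref{thm_seq_uni_specific_weights} by $2$ (the middle case via $2w_1w_2\le w_1^2+w_2^2$), then let the weight ratio tend to infinity in case~\ref{thm_seq_uni_weights1} (or~\ref{thm_seq_uni_weights3}) to approach the supremum. Your extra computation of the exact maximum of case~\ref{thm_seq_uni_weights2} is a harmless refinement of the paper's remark that this case stays bounded away from $2$.
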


\begin{proof}[Proof of \cref{thm_seq_uni_all_weights}]
  First, it is easy to see that~\ref{thm_seq_uni_weights1} is at most $2$ due to $w_2 \geq 0$.
  Second, \ref{thm_seq_uni_weights2} is at most $2$ since $0 \leq (w_1-w_2)^2 = w_1^2 - 2w_1w_2 + w_2^2$ together with $w_1^2 + w_1w_2 \geq 0$ implies $2w_1w_2 \leq 2w_1^2 + w_1w_2 + w_2^2$.
  Finally, \ref{thm_seq_uni_weights3} is at most $2$ due to $w_1 \geq 0$.
  Moreover, setting $w_2 = 1$ and letting $w_1 \rightarrow \infty$ yields a limit of $2$ in case~\ref{thm_seq_uni_weights1}.
  Similarly, $w_1 = 1$ and $w_2 \rightarrow \infty$ yields a limit of $2$ in case~\ref{thm_seq_uni_weights3}.
  Clearly, in the range $w_1 \leq w_2 \leq 2w_1$, the term in~\ref{thm_seq_uni_weights2} never gets close to $2$.
  \qed
\end{proof}

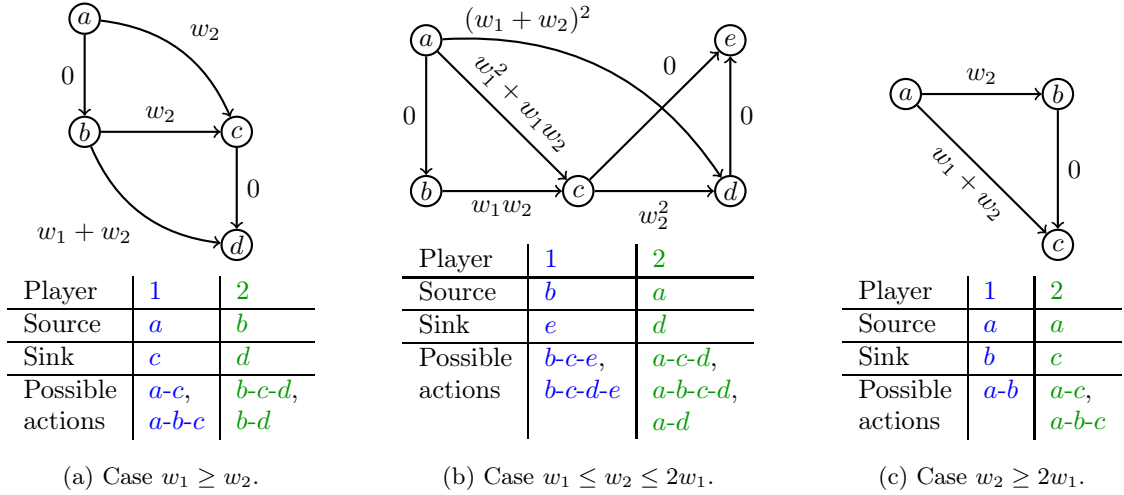
\begin{figure}[htpb]
  \subcaptionbox{Case~$w_1 \geq w_2$.\label{fig_seq_uni_instance1}}{%
    \begin{tikzpicture}[
      node/.style={draw,thick,circle,inner sep=0mm,minimum size=4mm},
      arc/.style={draw,->,thick}
    ]
      \begin{scope}
        \node[node] (s1) at (0,0.0) {$a$};
        \node[node] (s2) at (0,-1.5) {$b$};
        \node[node] (t1) at (2,-1.5) {$c$};
        \node[node] (t2) at (2,-3.0) {$d$};
        \draw[arc] (s1) to[bend left] node[above right] {$w_2$} (t1);
        \draw[arc] (s2) to[bend right] node[below left] {$w_1 + w_2$} (t2);
        \draw[arc] (s1) to node [left] {$0$} (s2);
        \draw[arc] (s2) to node [above] {$w_2$} (t1);
        \draw[arc] (t1) to node [right] {$0$} (t2);
      \end{scope}

      \node at (1,-4.5) {
        \begin{tabular}{l|l|l}
          Player    & \colorOne{1}           & \colorTwo{2} \\ \hline
          Source    & \colorOne{$a$}         & \colorTwo{$b$} \\ \hline
          Sink      & \colorOne{$c$}         & \colorTwo{$d$} \\ \hline
          Possible  & \colorOne{$a$-$c$},    & \colorTwo{$b$-$c$-$d$},  \\
          actions   & \colorOne{$a$-$b$-$c$} & \colorTwo{$b$-$d$}
        \end{tabular}
      };
    \end{tikzpicture}
  }
  \hfill
  \subcaptionbox{Case~$w_1 \leq w_2 \leq 2w_1$.\label{fig_seq_uni_instance2}}{%
    \begin{tikzpicture}[
      node/.style={draw,thick,circle,inner sep=0mm,minimum size=4mm},
      arc/.style={draw,->,thick}
    ]
      \begin{scope}
        \node[node] (a) at (0,2) {$a$};
        \node[node] (b) at (0,0) {$b$};
        \node[node] (c) at (2,0) {$c$};
        \node[node] (d) at (4,0) {$d$};
        \node[node] (e) at (4,2) {$e$};
        \draw[arc] (a) to node[left] {$0$} (b);
        \draw[arc] (b) to node[below] {$w_1w_2$} (c);
        \draw[arc] (a) to node[above,rotate=-45] {~$w_1^2 + w_1w_2$} (c);
        \draw[arc] (c) to node[below] {$w_2^2$} (d);
        \draw[arc] (d) to node[right] {$0$} (e);
        \draw[arc] (c) to node[above left, near end] {$0$} (e);
        \draw[arc] (a) to[bend left] node[above,near start] {$(w_1+w_2)^2$} (d);
      \end{scope}

      \node at (2,-2) {
        \begin{tabular}{l|l|l}
          Player    & \colorOne{1}                & \colorTwo{2} \\ \hline
          Source    & \colorOne{$b$}              & \colorTwo{$a$} \\ \hline
          Sink      & \colorOne{$e$}              & \colorTwo{$d$} \\ \hline
          Possible  & \colorOne{$b$-$c$-$e$},     & \colorTwo{$a$-$c$-$d$},  \\
          actions   & \colorOne{$b$-$c$-$d$-$e$}  & \colorTwo{$a$-$b$-$c$-$d$}, \\
                    &                             & \colorTwo{$a$-$d$}
        \end{tabular}
      };
    \end{tikzpicture}
  }
  \hfill
  \subcaptionbox{Case~$w_2 \geq 2w_1$.\label{fig_seq_uni_instance3}}{%
    \begin{tikzpicture}[
      node/.style={draw,thick,circle,inner sep=0mm,minimum size=4mm},
      arc/.style={draw,->,thick}
    ]
      \begin{scope}
        \node[node] (a) at (0,0.0) {$a$};
        \node[node] (b) at (2,0) {$b$};
        \node[node] (c) at (2,-2) {$c$};
        \draw[arc] (a) to node[above] {$w_2$} (b);
        \draw[arc] (b) to node[right] {$0$} (c);
        \draw[arc] (a) to node[below,rotate=-45] {$w_1+w_2$} (c);
      \end{scope}

      \node at (1,-3.5) {
        \begin{tabular}{l|l|l}
          Player    & \colorOne{1}        & \colorTwo{2} \\ \hline
          Source    & \colorOne{$a$}      & \colorTwo{$a$} \\ \hline
          Sink      & \colorOne{$b$}      & \colorTwo{$c$} \\ \hline
          Possible  & \colorOne{$a$-$b$}  & \colorTwo{$a$-$c$},  \\
          actions   &                     & \colorTwo{$a$-$b$-$c$}
        \end{tabular}
      };
    \end{tikzpicture}
  }
  \caption{%
    Sequential network routing games with weights $w_1,w_2$ for which the price of anarchy for uniform weights is worst possible.
    For each of the three cases we depict the network with cost coefficients (top) as well as the sources, sinks and possible actions of each player (bottom).
    The game in \cref{fig_seq_uni_instance1} is also worst possible for proportional weights and arbitrary $w_1,w_2$.
  }
  \label{fig_seq_uni_instance}
\end{figure}

\begin{proof}[Proof of the lower bound in \cref{thm_seq_uni_specific_weights}]
  Consider the network routing games described in \cref{fig_seq_uni_instance}.
  We consider the individual cases separately.
  
\medskip
\noindent
  \textbf{Case~\ref{thm_seq_uni_weights1}:}
  Considering the actions $a$-$c$ and $b$-$c$-$d$ for the two players in \cref{fig_seq_uni_instance1} reveals that the cost of the social optimum is at most $w_1 w_2 + w_2^2$.
  Inspection of the other costs of both players shows that $a$-$b$-$c$ is an optimal strategy for player~1 since we can assume that then player~2 optimally plays $b$-$d$.
  The total cost of this subgame perfect equilibrium is equal to $2w_1 w_2 + w_2^2$.
  This yields for the price of anarchy a lower bound of $\frac{2w_1 w_2 + w_2^2}{w_1 w_2 + w_2^2} = 1 + \frac{ w_1 \cdot w_2 }{(w_1 + w_2) \cdot w_2} = 1 + \frac{ w_1 }{w_1 + w_2}$.
  
\medskip
\noindent
  \textbf{Case~\ref{thm_seq_uni_weights2}:}
  Considering the actions $b$-$c$-$e$ and $a$-$c$-$d$ for the two players in \cref{fig_seq_uni_instance2} reveals that the cost of the social optimum is at most $2w_1^2 w_2 + w_1w_2^2 + w_2^3$.
  Inspection of the other costs of both players shows that $b$-$c$-$d$-$e$ is an optimal strategy for player~1 since we can assume that then player~2 optimally plays $a$-$d$.
  The total cost of this subgame perfect equilibrium is equal to $2w_1^2 w_2 + 3w_1 w_2^2 + w_2^3$.
  This yields for price of anarchy a lower bound of $\frac{2w_1^2 w_2 + 3w_1 w_2^2 + w_2^3}{2w_1^2 w_2 + w_1 w_2^2 + w_2^3} = 1 + \frac{2w_1 w_2^2}{2w_1^2 w_2 + w_1 w_2^2 + w_2^3} = 1 + \frac{ 2w_1 w_2 }{2w_1^2 + w_1 w_2 + w_2^2}$.

\medskip
\noindent
  \textbf{Case~\ref{thm_seq_uni_weights3}:}
  Considering the actions $a$-$b$ and $a$-$c$ for the two players in \cref{fig_seq_uni_instance3} reveals that the cost of the social optimum is at most $2w_1 w_2 + w_2^2$.
  Inspection of the other costs of both players shows that $a$-$b$ is an optimal strategy for player~1 and we can assume that then player~2 optimally plays $a$-$b$-$c$.
  The total cost of this subgame perfect equilibrium is equal to $2w_1 w_2 + 2w_2^2$.
  This yields for the price of anarchy a lower bound of $\frac{2w_1 w_2 + 2w_2^2}{2w_1 w_2 + w_2^2} = 1 + \frac{ w_2^2 }{2w_1 w_2 + w_2^2} = 1 + \frac{ w_2 }{2w_1 + w_2}$.
  \qed
\end{proof}

\subsection{Proportional costs}
\label{sec_seq_prop}

\begin{theorem}
  \label{thm_seq_prop_specific_weights}
  Let $w_1, w_2 \geq 0$ with $w_1 + w_2 > 0$.
  The price of anarchy for sequential proportionally weighted two-player congestion/network routing games with weights $w_1, w_2$ is equal to
  \[
    1 + \frac{w_1w_2}{w_1^2 + w_2^2}.
  \]
\end{theorem}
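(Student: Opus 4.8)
I would prove the two matching inequalities separately: a lower bound via an explicit network routing instance, and an upper bound via the duality-based mechanism announced in the introduction and developed in \cref{sec_lp}. Since the claimed expression $1+\frac{w_1w_2}{w_1^2+w_2^2}$ is symmetric in $w_1,w_2$ and the (sequential) price of anarchy maximizes over the two player orders, it suffices to analyze the order in which player~1 (weight $w_1$) moves first; running the upper-bound argument with generic first/second weights then yields the same symmetric bound regardless of which weight is larger. Also, the upper bound will be proved for arbitrary congestion games and the lower bound attained by a network routing game, so the stated value is exact for both classes simultaneously.

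\textbf{Lower bound.} I would reuse the network routing game of \cref{fig_seq_uni_instance1}, now read with proportional costs, as already announced in the caption of \cref{fig_seq_uni_instance}. With player~1 routing $a\to c$ and player~2 routing $b\to d$, one inspects the subgame. If player~1 plays $a$-$c$, then player~2's best response is $b$-$c$-$d$ (proportional cost $w_2^3$, versus $w_1w_2^2+w_2^3$ for $b$-$d$). If player~1 plays $a$-$b$-$c$, then player~2 is indifferent between $b$-$c$-$d$ and $b$-$d$ (both cost $w_1w_2^2+w_2^3$), so we may let her tie-break towards $b$-$d$; under that response player~1's own cost on $a$-$b$-$c$ equals $w_1^2w_2$, the same as on $a$-$c$, so $a$-$b$-$c$ is an optimal first move. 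The resulting subgame-perfect outcome has total cost $w_1^2w_2+w_1w_2^2+w_2^3$, while the profile $(a\text{-}c,\,b\text{-}c\text{-}d)$ certifies that the social optimum is at most $w_1^2w_2+w_2^3$; the quotient is $\frac{w_1^2+w_1w_2+w_2^2}{w_1^2+w_2^2}=1+\frac{w_1w_2}{w_1^2+w_2^2}$ for all admissible $w_1,w_2$.

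\textbf{Upper bound.} I would fix an arbitrary congestion game with proportional costs, a subgame-perfect outcome $b=(b_1,b_2)$, a social optimum $o=(o_1,o_2)$, and the auxiliary profile $(o_1,b_2')$ in which player~1 deviates to $o_1$ and player~2 plays a best response $b_2'$. Each resource is classified by which players use it in $b$, in $o$, and in $(o_1,b_2')$; since the contributions $\alpha_r$ and $\beta_r$ enter every relevant cost linearly, the worst-case ratio is the optimum of a linear program in the aggregated coefficients, exactly as in \cref{sec_lp}. Subgame-perfection of $b$ provides three families of valid inequalities: $C_2(b_1,b_2)\le C_2(b_1,o_2)$ (player~2 does not deviate at $b_1$), $C_2(o_1,b_2')\le C_2(o_1,o_2)$ (player~2 best-responds at $o_1$), and $C_1(b_1,b_2)\le C_1(o_1,b_2')$ (player~1 prefers $b_1$ to deviating to $o_1$). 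I would then exhibit explicit nonnegative multipliers — rational functions of $w_1,w_2$ obtained from the optimal dual of the LP, guessed by solving the LP for several ratios $w_1/w_2$ and interpolating — whose aggregation yields $C(b)\le\left(1+\frac{w_1w_2}{w_1^2+w_2^2}\right)C(o)$, and verify this as a polynomial inequality in $w_1,w_2\ge 0$ coefficient by coefficient.

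\textbf{Main obstacle.} The sequential structure is what makes this harder than the Nash case: player~2's reaction $b_2'$ to the deviation $o_1$ is not ours to choose, so the only leverage is the \emph{combination} of $C_1(b_1,b_2)\le C_1(o_1,b_2')$ with $C_2(o_1,b_2')\le C_2(o_1,o_2)$, forcing us to carry the extra profile $(o_1,b_2')$ and hence a noticeably larger family of resource types (and LP) than in \cref{sec_sim,sec_sym}. Pinning down the closed-form dual multipliers as functions of the weights and then certifying the resulting algebraic inequality is the delicate part; everything else — the resource bookkeeping, and tracking the constant terms $\alpha_r$ alongside the linear terms $\beta_r$ (which also enter all cost expressions linearly) — is routine.
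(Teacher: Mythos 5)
Your proposal matches the paper's proof essentially verbatim: the lower bound uses exactly the instance of \cref{fig_seq_uni_instance1} reinterpreted with proportional costs (your cost computations, the best response $b$-$c$-$d$ to $a$-$c$, and the tie-break of player~2 towards $b$-$d$ after $a$-$b$-$c$ are all correct and yield the ratio $1+\frac{w_1w_2}{w_1^2+w_2^2}$), and the upper bound invokes the LP-duality mechanism of \cref{sec_lp} with the three subgame-perfection inequalities, which is precisely the paper's route. Like the paper, you stop short of exhibiting the explicit dual multipliers and the resulting polynomial certificate for this particular (proportional, sequential) case, so the upper bound remains a correct method description rather than a fully worked verification.
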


By taking the supremum over all feasible weight pairs, we also obtain the following result for the case in which the weights are not fixed.

\begin{corollary}
  \label{thm_seq_prop_all_weights}
  The price of anarchy for sequential proportionally weighted two-player congestion games is equal to $1.5$, which is attained if and only if $w_1 = w_2$ holds.
\end{corollary}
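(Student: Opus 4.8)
The plan is to derive this corollary directly from \cref{thm_seq_prop_specific_weights} by maximizing the closed-form price of anarchy over all admissible weight pairs, exactly in the spirit of the proofs of \cref{thm_sim_uni_all_weights,thm_sim_prop_all_weights}. First I would observe that the expression $1 + \frac{w_1 w_2}{w_1^2 + w_2^2}$ is symmetric in $w_1$ and $w_2$ and invariant under scaling both weights by a common positive scalar, so it suffices to treat the case $w_1 \geq w_2 > 0$ and normalize $w_2 = 1$. This reduces the task to maximizing the single-variable function $f(x) = 1 + \frac{x}{x^2 + 1}$ over $x \in [1,\infty)$.

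Next I would compute $f'(x) = \frac{1 - x^2}{(x^2+1)^2}$, which is non-positive for every $x \geq 1$ and strictly negative for $x > 1$. Hence $f$ is strictly decreasing on $[1,\infty)$ and attains its unique maximum at $x = 1$, with value $f(1) = 1 + \tfrac{1}{2} = \tfrac{3}{2}$. Together with the symmetric argument for the range $w_2 \geq w_1$, this shows that over all weight pairs with both weights positive the price of anarchy is at most $\tfrac{3}{2}$, with equality exactly when $w_1 = w_2$.

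Finally I would dispose of the degenerate boundary cases $w_1 = 0$ or $w_2 = 0$ (still with $w_1 + w_2 > 0$): there the numerator $w_1 w_2$ vanishes, so \cref{thm_seq_prop_specific_weights} yields a price of anarchy of exactly $1 < \tfrac{3}{2}$, and these cases do not affect the supremum. Combining the three observations gives that the price of anarchy equals $\tfrac{3}{2}$ and is attained if and only if $w_1 = w_2$. I do not expect a real obstacle in this argument; the only point that needs a little care is the ``only if'' direction, which is covered by the strict monotonicity of $f$ away from $x = 1$ plus the explicit check of the degenerate cases.
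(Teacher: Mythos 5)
Your proposal is correct and takes essentially the same route as the paper: both derive the corollary by maximizing the closed-form expression $1 + \frac{w_1 w_2}{w_1^2 + w_2^2}$ from \cref{thm_seq_prop_specific_weights}. The only difference is mechanical: the paper establishes $\frac{w_1 w_2}{w_1^2 + w_2^2} \leq \frac{1}{2}$ (with equality iff $w_1 = w_2$) in one line from $(w_1 - w_2)^2 \geq 0$, whereas you normalize and differentiate, which is slightly longer but equally valid.
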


\begin{proof}[Proof of \cref{thm_seq_prop_all_weights}]
  From $0 \leq (w_1 - w_2)^2 = w_1^2 - 2w_1w_2  + w_2^2$ we obtain $2w_1w_2 \leq w_1^2 + w_2^2$ which readily implies $\frac{w_1w_2}{w_1^2 + w_2^2} \leq \frac{1}{2}$.
  Moreover, equality is attained if and only if $w_1 = w_2$ holds.
\end{proof}

\begin{proof}[Proof of the lower bound in \cref{thm_seq_prop_specific_weights}]
  Consider the network routing game described in \cref{fig_seq_uni_instance1}, but with a proportional cost function.
  The case in which the players choose the paths $a$-$c$ and $b$-$c$-$d$, respectively, has cost $w_1^2w_2 + w_2^3$.
  Hence, this value is an upper bound on the social optimum

  By inspecting the costs of both players for the other outcomes it becomes clear that $a$-$b$-$c$ is an optimal strategy for player~1 since we can assume that then player~2 optimally plays $b$-$d$.
  The total cost of this subgame perfect equilibrium is equal to $w_1^2w_2 + w_1w_2^2 + w_2^3$.
  This yields for the price of anarchy a lower bound of $\frac{(w_1^2 + w_1w_2 + w_2^2) \cdot w_2}{(w_1^2 + w_2^2) \cdot w_2} = 1 + \frac{w_1w_2}{w_1^2 + w_2^2}$, which concludes the proof.
\end{proof}

\section{LP based proofs}
\label{sec_lp}

\DeclareDocumentCommand\labels{}{\mathcal{A}}
\DeclareDocumentCommand\labelsOne{}{\mathcal{A}_1}
\DeclareDocumentCommand\labelsTwo{}{\mathcal{A}_2}

First, observe that for fixed weights $w_1$ and $w_2$ the cost functions~\eqref{eq_cost_uni} and~\eqref{eq_cost_prop} are linear.
The idea of computing the price of anarchy is greedy: we construct an LP for an instance of a game with a minimal set of actions $\actions$ that are required for a worst-case instance.
The LP has as variables the cost parameters $\alpha_r,\beta_r$ of the cost functions per resource $r \in R$.
This is finite, since as in~\cite{JongU19} we can argue that by pigeonhole principle at most $2^{|\actions|}$ resources are needed for any such an instance: if two resources appear in precisely the same actions, then these could be combined into one (adding their costs), which yields an instance with fewer resources but the same price of anarchy.
Given that the LP's cost functions per resource are yet undetermined, one can w.l.o.g.\ ``label'' actions as social optimum and equilibrium, respectively.
These labels are $\labels = \{ \labelOneOpt, \labelTwoOpt, \labelOneEqui, \labelTwoEqui \}$.
For sequential games we extend $\labels$ by the label $\labelTwoEquiX$ for an optimal action for player~2 after player~1 has played $\labelOneOpt$.
The observation above justifies to view resources as subsets of labels, i.e., $R = 2^{\labels}$.
Therefore, if $r \in A$, we can also write $A \in r$, as we view $r$ as the set of all actions $A \ni r$.
In addition to the cost variables $\alpha_r, \beta_r$ for each $r \in R$ the LP has variables for the costs $C_i(A_1,A_2)$ for $i=1,2$ and for all labels $A_j \in \labels_j$ that are admissible for player~$j \in \{1,2\}$.
We now introduce the constraints of the LP that ensure correctness of the labels and then prove that optimal solutions indeed correspond to worst-case instances.
We start with the basic (but incomplete) LP.
\begin{subequations}
  \label{eq_lp_common}
  \begin{alignat}{7}
    & \text{max } ~\mathrlap{ \costOne(\labelOneEqui, \labelTwoEqui) + \costTwo(\labelOneEqui, \labelTwoEqui) } \\
    & \text{s.t. }
      & \costOne(\labelOneOpt,\labelTwoOpt) + \costTwo(\labelOneOpt,\labelTwoOpt) &= 1 \label{eq_lp_common_normalization} \\
    & & \costOne(A_1,A_2) + \costTwo(A_1,A_2) &\geq 1 &\qquad&\forall A_1 \in \labelsOne,~ \forall A_2 \in \labelsTwo \label{eq_lp_common_opt} \\
    & & \alpha_r,\beta_r &\geq 0 &&\forall r \in R
  \end{alignat}
\end{subequations}
In case of uniform cost functions~\eqref{eq_cost_uni} we add
\begin{align}
  C_i(A_1,A_2) &= \sum_{\substack{r \in R \\ A_i \in r}} (\alpha_r + \beta_r \sum_{j : A_j \in r} w_j) && \forall A_1 \in \labelsOne,~ \forall A_2 \in \labelsTwo, ~i=1,2 \, , \label{eq_lp_uni}
\end{align}
while in case of proportional cost functions~\eqref{eq_cost_prop} we add
\begin{align}
  C_i(A_1,A_2) &= w_i \sum_{\substack{r \in R \\ A_i \in r}} (\alpha_r + \beta_r \sum_{j : A_j \in r} w_j) && \forall A_1 \in \labelsOne,~ \forall A_2 \in \labelsTwo, ~i=1,2 \, . \label{eq_lp_prop}
\end{align}

\paragraph{Simultaneous games.}
For simultaneous games we need to add the Nash inequalities~\eqref{eq_lp_nash} to enforce that $\labelOneEqui$ and $\labelTwoEqui$ form a Nash equilibrium.
For general games we define $\actionsOne \coloneqq \{ \labelOneOpt, \labelOneEqui \}$ and $\actionsTwo \coloneqq \{ \labelTwoOpt, \labelTwoEqui \}$, while for symmetric games we allow both players to use the union $\actionsOne \coloneqq \actionsTwo \coloneqq \{ \labelOneOpt,\labelOneEqui, \labelTwoOpt, \labelTwoEqui \}$ of these actions.

\paragraph{Sequential games.}
The following constraints model that $\labelTwoEqui$ and $\labelTwoEquiX$ are optimal actions for player~2 after player~1 has played $\labelOneEqui$ and $\labelOneOpt$, respectively, as well as the requirement that $\labelOneEqui$ is a subgame-perfect action for player~1.
\begin{subequations}
  \label{eq_lp_seq}
  \begin{align}
    \costTwo(\labelOneEqui,\labelTwoEqui) &\leq \costTwo(\labelOneEqui,A_2) &&\forall A_2 \in \actionsTwo \label{eq_lp_seq_equi_change_two} \\
    \costTwo(\labelOneOpt,\labelTwoEquiX) &\leq \costTwo(\labelOneOpt,A_2) &&\forall A_2 \in \actionsTwo \label{eq_lp_seq_opt_change_two} \\
    \costOne(\labelOneEqui,\labelTwoEqui) &\leq \costOne(\labelOneOpt,\labelTwoEquiX) \label{eq_lp_seq_change_one}
  \end{align}
\end{subequations}
We thus define $\actionsOne \coloneqq \{ \labelOneOpt, \labelOneEqui \}$ and $\actionsTwo \coloneqq \{ \labelTwoOpt, \labelTwoEqui, \labelTwoEquiX \}$.
Our approach does not work for symmetric sequential games.
The reason is that if the actions from $\labelsTwo$ are available to player~1 as well, then for each of these actions we would need to introduce new actions for the subgame-perfect action of player~2 in such a case, which in turn would be available to player~1, and so on.

We now establish correctness of all variants of this LP.
\begin{theorem}
  \label{thm_correctness}
  For fixed weights $w_1,w_2 \geq 0$ with $w_1 + w_2 > 0$ the LP~\eqref{eq_lp_common} for the action sets $\actionsOne,\actionsTwo$ specified above, and extended by either~\eqref{eq_lp_uni} or~\eqref{eq_lp_prop} and by either~\eqref{eq_lp_nash} or~\eqref{eq_lp_seq} computes the price of anarchy for affine (sequential or simultaneous or symmetric simultaneous) congestion games with these specific weights.
\end{theorem}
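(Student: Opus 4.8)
The plan is to prove that the LP optimum both is at least and is at most the price of anarchy $\textup{PoA}$ of the respective class, so the two agree. The LP is feasible (set all $\alpha_r,\beta_r$ to $0$), and by the second part below its objective on any feasible point is bounded above by $\textup{PoA}$; since $\textup{PoA}$ is finite (for the simultaneous variants at most $1+\phi$ by \cite{AwerbuchAE05}, and the sequential two-player case is bounded by a constant as well), the LP attains an optimum.

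\emph{The LP optimum is at least $\textup{PoA}$.} Fix any instance $I$ of the class, an equilibrium outcome $(\labelOneEqui,\labelTwoEqui)$ achieving $\textup{PoA}(I)$ — in the sequential case together with a backward-induction best response $\labelTwoEquiX$ of player~2 to $\labelOneOpt$ — and a social optimum $(\labelOneOpt,\labelTwoOpt)$. Scaling all coefficients $\alpha_r,\beta_r$ by a common positive factor scales all costs identically and leaves the equilibrium conditions, optimality, and the cost ratio unchanged, so we may assume $\costOne(\labelOneOpt,\labelTwoOpt)+\costTwo(\labelOneOpt,\labelTwoOpt)=1$. Next restrict each player's action set to the labelled actions: $\actionsOne=\{\labelOneOpt,\labelOneEqui\}$ and $\actionsTwo=\{\labelTwoOpt,\labelTwoEqui\}$ for asymmetric games, with $\labelTwoEquiX$ added to $\actionsTwo$ for sequential games, and the union of all four labelled actions for both players in the symmetric case. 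Because the equilibrium conditions \eqref{eq_lp_nash} resp.\ \eqref{eq_lp_seq} for the restricted action sets form a subset of those that held against the original, larger action sets, $(\labelOneEqui,\labelTwoEqui)$ remains an equilibrium; and $(\labelOneOpt,\labelTwoOpt)$ still minimises total cost over the restricted profiles, which after the normalisation is exactly constraint \eqref{eq_lp_common_opt}. Finally, merge any two resources used by precisely the same actions into a single resource whose coefficients are the respective sums; such resources carry identical load in every profile, so all costs — uniform as well as proportional, by \eqref{eq_cost_uni} and \eqref{eq_cost_prop} — are preserved. The resulting resource set can be identified with a subset of $2^{\labels}$, the remaining elements receiving $\alpha_r=\beta_r=0$, and on it \eqref{eq_lp_uni} (resp.\ \eqref{eq_lp_prop}) holds by the definition of the cost functions. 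Reading off the coefficients and the realised cost values $\costOne(A_1,A_2),\costTwo(A_1,A_2)$ thus yields a feasible LP point whose objective is $\costOne(\labelOneEqui,\labelTwoEqui)+\costTwo(\labelOneEqui,\labelTwoEqui)=C(\labelOneEqui,\labelTwoEqui)/C(\labelOneOpt,\labelTwoOpt)=\textup{PoA}(I)$; taking the supremum over all $I$ gives the bound.

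\emph{The LP optimum is at most $\textup{PoA}$.} Conversely, from any feasible LP solution build the congestion game on resource set $R=2^{\labels}$ with coefficients $\alpha_r,\beta_r$, in which player~$i$'s admissible actions are the sets $\{\,r\in R : A_i\in r\,\}$ for $A_i\in\mathcal{A}_i$. By \eqref{eq_lp_uni} (resp.\ \eqref{eq_lp_prop}) the LP's cost variables equal the true costs; by \eqref{eq_lp_common_normalization} and \eqref{eq_lp_common_opt}, $(\labelOneOpt,\labelTwoOpt)$ is a social optimum of cost $1$; and by \eqref{eq_lp_nash} the profile $(\labelOneEqui,\labelTwoEqui)$ is a pure Nash equilibrium, resp.\ by \eqref{eq_lp_seq} — which states that $\labelTwoEqui,\labelTwoEquiX$ are best responses of player~2 to $\labelOneEqui,\labelOneOpt$ and that player~1 weakly prefers playing $\labelOneEqui$ — the profile $(\labelOneEqui,\labelTwoEqui)$ arises as the outcome of a subgame-perfect equilibrium under backward induction. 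Hence $\textup{PoA}(I)\ge C(\labelOneEqui,\labelTwoEqui)/C(\labelOneOpt,\labelTwoOpt)$, which equals the LP objective of this solution, so every feasible objective value is at most $\textup{PoA}$.

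For sequential games this proves the statement for the fixed order in which player~1 moves first; the overall sequential price of anarchy is the maximum of this LP value and that of the analogous LP with the two weights interchanged. The main obstacle is the bookkeeping at the two points where the variants genuinely differ: verifying that the equilibrium/optimum structure both survives the action-restriction step and is captured exactly — neither weakened nor strengthened — by \eqref{eq_lp_nash} resp.\ \eqref{eq_lp_seq}. For sequential games this is precisely what forces the auxiliary label $\labelTwoEquiX$ and the three constraints \eqref{eq_lp_seq}, and it is also why the approach fails for symmetric sequential games: allowing player~1 the actions of $\labelsTwo$ would require an auxiliary best-response action of player~2 for each of them, which would again be available to player~1, generating an unbounded cascade.
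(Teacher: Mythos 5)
Your proof is correct and follows essentially the same two-direction argument as the paper: any instance is converted to a feasible LP point by normalising the social optimum to $1$ and aggregating resources by their incidence pattern on the labelled actions, and conversely any feasible LP point is read off as a game whose equilibrium/optimum structure is certified by \eqref{eq_lp_nash} resp.\ \eqref{eq_lp_seq}. One tiny slip in a peripheral remark: the all-zero point is \emph{not} feasible because \eqref{eq_lp_common_normalization} forces the optimum cost to equal $1$, but feasibility follows anyway from your first direction applied to any concrete instance.
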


\begin{proof}
  Consider a class of games and an LP as defined in the theorem.

  We first argue that a primal solution actually represents a game and that its objective value corresponds to its price of anarchy.
  The resources of the game are given by those $r \in R$ for which (at least one of) $\alpha_r$ or $\beta_r$ is positive.
  The game has $|\actions|$ actions that are labeled according to $\labels$.
  Each action $A \in \actions$ uses exactly those resources $r \in R$ for which $A \in r$ holds.
  The cost function of each player is defined via~\eqref{eq_lp_uni} or~\eqref{eq_lp_prop}, and these constraints ensure that the cost variables $C_i(A_1,A_2)$ actually represent the players' costs for these actions $A_1$ and $A_2$.
  Constraints~\eqref{eq_lp_nash} or~\eqref{eq_lp_seq} ensure that $(\labelOneEqui,\labelTwoEqui)$ is actually an equilibrium.
  Moreover, constraints~\eqref{eq_lp_common_normalization} and~\eqref{eq_lp_common_opt} enforce that $\labelOneOpt$ and $\labelTwoOpt$ form a socially optimal strategy.
  Since the latter has cost $1$, the objective value $\costOne(\labelOneEqui, \labelTwoEqui) + \costTwo(\labelOneEqui, \labelTwoEqui)$ is the price-of-anarchy for this game.
  If we consider symmetric (simultaneous) games then $\actionsOne = \actionsTwo$ implies that the defined game is indeed symmetric.

  It remains to show that \emph{every} game can be represented by a primal solution.
  To this end, consider a game with resources $\bar{R}$, actions
  $\bar{\labelsOne},\bar{\labelsTwo} \subseteq 2^{\bar{R}}$ for the two players as well as cost coefficients $\bar{\alpha},\bar{\beta} \in \R^{\bar{R}}$.
  By scaling we can assume that the cost of the social optimum is equal to $1$.
  We now create a mapping $\pi$ from labels to the game's actions.
  To this end, let $\pi(\labelOneOpt) \in \bar{\labelsOne}$ and $\pi(\labelTwoOpt) \in \bar{\labelsTwo}$ be actions that constitute a social optimum.
  Moreover, we consider an equilibrium (either Nash or subgame-perfect) for which the price of anarchy of this game is attained and let $\pi(\labelOneEqui) \in \bar{\labelsOne}$ and $\pi(\labelTwoEqui) \in \bar{\labelsTwo}$ constitute such an equilibrium.
  In the sequential case, let $\pi(\labelTwoEquiX) \in \bar{\labelsTwo}$ be an action that is subgame-perfect for player~2 after player~1 has played $\pi(\labelOneOpt)$.

  The mapping $\pi$ of labels $\labels$ to actions $\pi(\actions)\subseteq \bar{\actions}$ identifies the ``relevant'' actions from $\bar{\labels}$. For any resource $\bar{r}\in\bar{R}$, we can now associate with it the ``incidence pattern'' of the set of actions from $\pi(\labels)$ in which $\bar{r}$ is contained.  This 
  induces a reverse mapping $\chi$ that maps each resource $\bar{r} \in \bar{R}$ to the unique $r \in R$  which has the same incidences.
  Formally, $A \in \chi(\bar{r}) \iff \bar{r} \in \pi(A)$ must hold for all $A \in \labels$.
  This allows us to aggregate the resources $\bar{R}$ accordingly via
  \begin{align}
    &&&&
    \alpha_r &\coloneqq \sum_{\bar{r}\ :\ r= \chi(\bar{r})} \bar{\alpha}_{\bar{r}}
    &\text{and}&&
    \beta_r &\coloneqq \sum_{\bar{r}\ :\ r= \chi(\bar{r})} \bar{\beta}_{\bar{r}} \, . \label{eq_mapping}
    &&&&
  \end{align}
  By equations~\eqref{eq_lp_uni} or~\eqref{eq_lp_prop}, the values of the remaining variables are determined uniquely.
  Moreover, \eqref{eq_mapping} ensures that for $i=1,2$ and for any profile $(A_1,A_2) \in \actionsOne \times \actionsTwo$, $C_i(A_1,A_2)$ is equal to the cost of player~$i$ if actions $\pi(A_1)$ and $\pi(A_2)$ are played in the game.
  In particular, constraints~\eqref{eq_lp_common_normalization}, \eqref{eq_lp_common_opt} as well as either~\eqref{eq_lp_nash} or~\eqref{eq_lp_seq} are satisfied.
  Consequently, the objective value corresponds to the price of anarchy of this game since the social optimum of the game equals $1$. Hence the optimal LP solution equals the price of anarchy for the given class of games.
\end{proof}

\paragraph{Arbitrary weights.}
In order to  derive the price of anarchy for arbitrary weights, the LPs from \cref{thm_correctness} can be used as an auxiliary tool.
First, an approximate version of \cref{fig_plot} can be produced, from which we could guess intervals of weight ratios $w_1/w_2$ for which the same LP basis is optimal.
Second, for each such interval, several weight pairs are chosen and optimal primal and dual solutions are computed.
Using the LP solver SoPlex, we computed exact rational solutions (see~\cite{GleixnerSW15}), which helped to derive educated guesses for algebraic expressions.

Before we provide a concrete example of such a derivation, we mention two tricks that were important.
First, cancellations in the expressions can be avoided by choosing prime numbers for the weights.
Second, we observed that often several optimal solutions exist, which makes it hard to make an educated guess of the weight-dependent expression for each variable.
We were able to circumvent this difficulty by forcing some of the cost variables to $0$ while ensuring that optimality of the solution is maintained.

\begin{table}[htpb]
  \caption{Nonzeros of optimal primal and dual solutions of LP~\eqref{eq_lp_common} extended with~\eqref{eq_lp_uni} and~\eqref{eq_lp_seq} for different weight combinations satisfying $2w_1 \leq w_2$.}
  \label{tab_educated_guess}
  \begin{center}
    \begin{tabular}{r|r|r|r|r|r|r|r|r}
      \multicolumn{2}{c|}{\textbf{Weights}} & \multicolumn{3}{c|}{\textbf{Primal optimal solution}} & \multicolumn{4}{c}{\textbf{Dual optimal solution}} \\
      $w_1$ & $w_2$ & $\beta_{\{ \labelOneEqui \}}$ & $\beta_{\{\labelTwoOpt, \labelTwoEqui \}}$ & $\beta_{\{ \labelOneOpt, \labelOneEqui, \labelTwoEquiX \}}$ & \eqref{eq_lp_common_normalization} &  \eqref{eq_lp_seq_equi_change_two}$_{\labelTwoEquiX}$ & \eqref{eq_lp_seq_opt_change_two}$_{\labelTwoOpt}$ & \eqref{eq_lp_seq_change_one}  \\ \hline
      2 & 14  & 7/18  & 4/63      & 1/18  & 16/9   & 1 & 16/9   & 1 \\
      1 & 7   & 7/9   & 8/63      & 1/9   & 16/9   & 1 & 16/9   & 1 \\
      1 & 13  & 13/15 & 14/195    & 1/15  & 28/15  & 1 & 28/15  & 1 \\
      1 & 100 & 50/51 & 101/10200 & 1/102 & 101/51 & 1 & 101/51 & 1 \\
    \end{tabular}
  \end{center}
\end{table}

We now demonstrate this method for the example of uniformly weighted sequential games.
By plotting the price of anarchy for many weight ratios $w_1/w_2$ (see \cref{fig_plot}), one can immediately conjecture that there are three different algebraic expressions, namely for the cases $w_1/w_2 \leq \frac{1}{2}$, $w_1/w_2 \in [\frac{1}{2},1]$ and for $w_1/w_2 \geq 1$.
We only show how we derived the expressions for weights satisfying $2w_1 \leq w_2$.
To this end, we compute exact LP solutions for different weight combinations.
To avoid cancellations in the expressions it is useful to choose prime numbers for the weights.
Moreover, there are often several optimal solutions which makes it hard to make an educated guess of the weight-dependent expression for each variable.
To avoid this, our implementation of the LP allows to force (resource) variables to $0$.
Of course, one has to ensure that this does not affect optimality.
Finally, our implementation does not actually have cost variables because they can be expressed as combinations of the $\alpha$ and $\beta$ variables.
\cref{tab_educated_guess} shows the primal and dual solutions for different weight combinations.
The first observation is that scaling $w_1$ and $w_2$ by $\lambda > 0$ implies a scaling of the primal solution by $1/\lambda$ but does not affect the dual solution.
This can easily be explained by the normalization of the social optimum to $1$.
A few guesses for the optimal solution values can be made.
We first read off
\begin{equation*}
  \beta_{\{\labelOneOpt,\labelOneEqui,\labelTwoEquiX}\} = \frac{1}{2w_1 + w_2} = \frac{w_1w_2}{2w_1^2w_2 + w_1w_2^2}.
\end{equation*}
Taking the denominator $2w_1 + w_2$ into account, the value
\begin{equation*}
\beta_{\{\labelOneEqui\}} = \frac{w_2/w_1}{2w_1 + w_2} = \frac{ w_2^2 }{ 2w_1^2w_2 + w_1w_2^2 }
\end{equation*}
is also easy to see.
The denominators of $\beta_{\{\labelTwoOpt,\labelTwoEqui\}}$ are all divisible by $2w_1 = w_2$ (considering $8/126$ instead of $4/63$ in the first row), and the factor is always equal to $w_2/w_1$.
From this we can conclude
\begin{equation*}
  \beta_{\{\labelTwoOpt,\labelTwoEqui\}} = \frac{ 1 + w_2/w_1 }{ (2w_1 + w_2) \cdot w_2/w_1 }
  = \frac{w_1^2 + w_1w_2}{2w_1^2w_2 + w_1w_2^2}
\end{equation*}
for this variable.
That the dual variables for~\eqref{eq_lp_seq_equi_change_two} (for $A_2 = \labelTwoEquiX$) and~\eqref{eq_lp_seq_change_one} are always $1$ is easy to see.
For those of~\eqref{eq_lp_common_normalization} and~\eqref{eq_lp_seq_opt_change_two} (for $A_2 = \labelTwoOpt$) we finally obtain $2w_1 + w_2$ as the denominator and $2w_1 + 2w_2$ for the numerator.

\paragraph{Algebraic proofs.}
Once we have conjectures for algebraic expressions for primal and dual solutions we can turn these into an algebraic proof.
This is a very mechanic procedure which is why we again describe it only for uniformly weighted sequential games with weights $2w_1 \leq w_2$.
In \cref{tab_example_costs} we state the costs that each player has to pay for every possible strategy.
It is easy to verify that $(\labelOneOpt,\labelTwoOpt)$ is a social optimum.
Moreover, we confirm that $\labelTwoEqui$ and $\labelTwoEquiX$ are subgame-perfect actions for player~2 after player~1 has chosen $\labelOneEqui$ or $\labelOneOpt$, respectively.
Consequently, $\labelOneEqui$ is a subgame-perfect action for player~1.
The resulting price of anarchy is $1 + \frac{2w_2}{ 2w_1 + w_2 }$ (cf.\ \cref{thm_seq_uni_specific_weights}).

\begin{table}[htbp]
  \caption{Costs for an optimal solution for uniformly weighted sequential games, scaled by the common denominator $\mu = 2w_1^2w_2 + w_1w_2^2$.}
  \label{tab_example_costs}
  \begin{center}
    \begin{tabular}{ll|ll|l}
      $A_1$ & $A_2$ & $\mu C_1(A_1,A_2)$ & $\mu C_2(A_1,A_2)$ & $\mu (C_1 + C_2)$ \\
      \hline
      $\labelOneOpt$ & $\labelTwoOpt$ & $w_1w_2 \cdot w_1$ & $(w_1^2+w_1w_2) \cdot w_2$ & $\mu$ \\
      $\labelOneOpt$ & $\labelTwoEqui$ & $w_1w_2 \cdot w_1$ & $(w_1^2+w_1w_2) \cdot w_2$ & $\mu$ \\
      $\labelOneOpt$ & $\labelTwoEquiX$ & $w_1w_2 \cdot (w_1+w_2)$ & $w_1w_2 \cdot (w_1+w_2)$ & $\mu + w_1w_2^2$ \\
      $\labelOneEqui$ & $\labelTwoOpt$ & $w_2^2 \cdot w_1 + w_1w_2 \cdot w_1$ & $(w_1^2+w_1w_2) \cdot w_2$ & $\mu + w_1w_2^2$ \\
      $\labelOneEqui$ & $\labelTwoEqui$ & $w_2^2 \cdot w_1 + w_1w_2 \cdot w_1$ & $(w_1^2+w_1w_2) \cdot w_2$ & $\mu + w_1w_2^2$ \\
      $\labelOneEqui$ & $\labelTwoEquiX$ & $w_2^2 \cdot w_1 + w_1w_2 \cdot (w_1+w_2)$ & $w_1w_2 \cdot (w_1 + w_2)$ & $\mu + 2w_1w_2^2$
    \end{tabular}
  \end{center}
\end{table}

The dual multipliers tell us how to combine the inequalities of the LP to a valid inequality that yields a matching upper bound on the optimum.
The corresponding combination of the inequalities reads
\begin{align*}
  \frac{2w_1+2w_2}{2w_1+w_2} \costOne(\labelOneOpt,\labelTwoOpt) + \frac{2w_1+2w_2}{2w_1+w_2} \costTwo(\labelOneOpt,\labelTwoOpt) &= \frac{2w_1+2w_2}{2w_1+w_2} \tag{\ref{eq_lp_common_normalization}} \\
  \costTwo(\labelOneEqui,\labelTwoEqui) &\leq \costTwo(\labelOneEqui,\labelTwoEquiX) \tag{\ref{eq_lp_seq_equi_change_two}} \\
  \frac{2w_1+2w_2}{2w_1+w_2} \costTwo(\labelOneOpt,\labelTwoEquiX) &\leq \frac{2w_1+2w_2}{2w_1+w_2} \costTwo(\labelOneOpt,\labelTwoOpt) \tag{\ref{eq_lp_seq_opt_change_two}} \\
  \costOne(\labelOneEqui,\labelTwoEqui) &\leq \costOne(\labelOneOpt,\labelTwoEquiX) \tag{\ref{eq_lp_seq_change_one}}
\end{align*}
Simplifying their sum yields
\begin{multline*}
  \frac{2w_1+2w_2}{2w_1+w_2} \costOne(\labelOneOpt,\labelTwoOpt)
  + \costTwo(\labelOneEqui,\labelTwoEqui)
  + \frac{2w_1+2w_2}{2w_1+w_2} \costTwo(\labelOneOpt,\labelTwoEquiX) \\
  + \costOne(\labelOneEqui,\labelTwoEqui)
  \leq \frac{2w_1+2w_2}{2w_1+w_2}
  + \costTwo(\labelOneEqui,\labelTwoEquiX)
  + \costOne(\labelOneOpt,\labelTwoEquiX)
\end{multline*}
In order to show $\costOne(\labelOneEqui,\labelTwoEqui) + \costTwo(\labelOneEqui,\labelTwoEqui) \leq \frac{2w_1+2w_2}{2w_1+w_2}$
it remains to prove that
\begin{equation}
  \frac{2w_1+2w_2}{2w_1+w_2} \costOne(\labelOneOpt,\labelTwoOpt)
  + \frac{2w_1+2w_2}{2w_1+w_2} \costTwo(\labelOneOpt,\labelTwoEquiX)
  - \costTwo(\labelOneEqui,\labelTwoEquiX)
  - \costOne(\labelOneOpt,\labelTwoEquiX)
  \label{eq_dual_proof_final}
\end{equation}
is nonnegative.
Unfortunately, cancellations only occur on the resource variable level.
Consider a resource $r \in R$ with $\labelOneOpt,\labelOneEqui,\labelTwoEquiX \in r$.
The sum of the coefficients of $\beta_r$ is at least
\begin{align*}
  &\frac{2w_1+2w_2}{2w_1+w_2} w_1 + \frac{2w_1+2w_2}{2w_1+w_2} w_2 + \frac{2w_1^2+2w_1w_2}{2w_1+w_2} - w_2 - w_1
  - w_1 - w_2 \\
  =\; &\frac{ (2w_1^2+2w_1w_2) + (2w_1w_2+2w_2^2) + (2w_1^2+2w_1w_2) - (4w_1^2 + 2w_1w_2) - (4w_1w_2 + 2w_2^2) }{ 2w_1+w_2 }
  = 0
\end{align*}
If $\labelOneOpt,\labelTwoEquiX \in r$ but $\labelOneEqui \notin r$, then we obtain for the sum of the coefficients of $\beta_r$ at least
\begin{align*}
  &\frac{2w_1+2w_2}{2w_1+w_2} w_1 + \frac{2w_1+2w_2}{2w_1+w_2} w_2 + \frac{2w_1^2+2w_1w_2}{2w_1+w_2} - w_2 - w_1 -  w_2 \\
  =\; & \frac{ (2w_1^2+2w_1w_2) + (2w_1w_2+2w_2^2) + (2w_1^2+2w_1w_2) - (4w_1w_2 + 2w_2^2) - (2w_1^2 + w_1w_2) }{2w_1+w_2} \\
  =\; &\frac{ 2w_1^2 + w_1w_2 }{2w_1+w_2} \geq 0
\end{align*}
If $\labelOneEqui,\labelTwoEquiX \in r$ but $\labelOneOpt \notin r$, then we obtain for the sum of the coefficients of $\beta_r$ at least
\begin{align*}
  &\frac{2w_1+2w_2}{2w_1+w_2} w_2 - w_2 - w_1
  = \frac{ (2w_1w_2+2w_2^2) - (2w_1w_2 + w_2^2) - (2w_1^2 + w_1w_2) }{2w_1+w_2} \\
  =\; &\frac{ -2w_1^2 - w_1w_2 + w_2^2 }{2w_1+w_2}
  = \frac{ (w_2 - 2w_1) \cdot (w_2 + w_1) }{2w_1+w_2}
\end{align*}
which is nonnegative due to $2w_1 \leq w_2$.
The sum of the coefficients of $\beta_r$ for all other $r \in R$ as well as of $\alpha_r$ for all $r \in R$ is easily seen to be nonnegative since every contribution of the last two terms in~\eqref{eq_dual_proof_final} is compensated by one of the first two terms because $\frac{2w_1 + 2w_2}{2w_1 + w_2} \geq 1$ holds.
This concludes the proof that the price of anarchy for uniformly weighted sequential $2$-player games with weights satisfying $2w_1 \leq w_2$ is equal to $1 + w_2 / (2w_1 + w_2)$.

The proof for the other types of congestion games follow exactly the same method.
However, some require higher-degree polynomials and are thus even more technical.

\section{Concluding remarks}
\label{sec_conclusion}

One of the unexpected findings of our paper is the fact that the worst-cases for simultaneous games are attained for players' weights that differ only slightly, and that sequential play reduces the price of anarchy irrespective of of the players' weights.
However, this discrepancy vanishes for increasing weight difference.
While for simultaneous games the symmetry with respect to players implies that the prices of anarchy for weight ratios $\lambda$ and $1/\lambda$ are equal, no such implication holds for sequential games.
However, \cref{fig_plot} shows that the same symmetry also holds for sequential games with proportional costs.
Surprisingly, a different symmetry holds for sequential games with uniform costs, namely equality for weight ratios $\lambda/2$ and $1/\lambda$.

As to methodology, observe that the algebraic expressions for the price of anarchy are already quite complicated for two players, especially for symmetric simultaneous games.
Hence, a similar analysis for the 3-player case seems to be out of reach.


\bibliographystyle{plain}
\bibliography{weighted-price-of-anarchy-two-players}

\end{document}